\DeclareMathOperator{\wt}{wt}
\DeclareMathOperator{\rk}{rank}
\newcommand{\mc}[1]{\mathcal{#1}}
\newcommand{\F}{\mathbb{F}}
\newcommand{\nix}[1]{}
\newtheorem{theorem}{Theorem}
\newtheorem{corollary}[theorem]{Corollary}
\newtheorem{lemma}{Lemma}
\newtheorem{remark}{Remark}
\renewcommand{\thetheorem}{\Alph{theorem}}
\newcommand{\ba}{\begin{array}}
\newcommand{\ea}{\end{array}}
\newcommand{\ben}{\begin{eqnarray}}
\newcommand{\een}{\end{eqnarray}}
\newcommand{\be}{\begin{eqnarray*}}
\newcommand{\ee}{\end{eqnarray*}}
\begin{document}


\title{ Topological Subsystem Codes From Graphs and Hypergraphs}


\author{Pradeep Sarvepalli}
\email[]{pradeep.sarvepalli@gatech.edu}
\author{Kenneth R. Brown}
\affiliation{Schools of Chemistry and Biochemistry; Computational Science and Engineering; and Physics, Georgia Institute of Technology, GA 30332}

\date{June 30, 2012}

\begin{abstract}
Topological subsystem codes were proposed by Bombin based on 3-face-colorable cubic graphs. Suchara, Bravyi and Terhal generalized this construction  and proposed a method to construct topological subsystem codes using 3-valent hypergraphs that satisfy certain constraints. Finding such hypergraphs and computing their parameters however is a  nontrivial task.  We propose families of topological subsystem codes that were previously not known.  In particular, our constructions give codes which cannot be derived from Bombin's construction.  We also study the error recovery schemes for the proposed subsystem codes and give detailed schedules for the syndrome measurement that take advantage of the 2-locality of the gauge group.  The study also leads to a new and general construction for color codes.
\end{abstract}

\pacs{}
\keywords{quantum codes, topological quantum codes, subsystem codes, hypergraphs, decoding, color codes}

\maketitle
\section{Introduction}

A major challenge in the theory of quantum error correcting codes is to design codes that are
well suited for fault tolerant quantum computing. Such codes have many stringent requirements 
imposed on them, constraints that are usually not considered in the design of classical codes.
An important metric that captures the suitability of a family of codes for 
fault tolerant quantum computing  is the threshold of that family of codes. 
Informally, the threshold of a sequence  of codes of increasing length is the maximum error rate that can be tolerated by the family as we increase the length of the codes in the sequence.
The threshold is affected by numerous factors and there is no single parameter that we can optimize to design codes with high threshold. Furthermore, in the literature thresholds are reported under 
various assumptions. As the authors of \cite{landahl11} noted, there are three thresholds that
are of interest: i) the code threshold which assumes there are no measurement errors, ii) the phenomenological threshold which incorporates to some extent the errors due to measurement errors, and iii) the circuit 
threshold which incorporates all errors due to gates and measurements. For a given family of codes, 
invariably the code threshold is the highest and the circuit threshold the lowest. 

One of the nonidealities that affects the lowering of thresholds is the introduction of measurement
errors. So codes which have same code thresholds, such as the toric codes and color codes,
can end up with different circuit thresholds \cite{landahl11,fowler11}. At this point one can attempt 
to  improve the circuit threshold by 
designing codes that have efficient recovery schemes and are more robust 
to measurement errors among other things. An important development in this direction has come in the
form of subsystem codes, also called as operator error correcting codes \cite{bacon06a,kribs05, kribs05b,kribs05c,poulin05,ps08}. By providing additional degrees
of freedom subsystem codes allow us to design recovery schemes which are 
more robust to circuit nonidealities. That they can improve the threshold has already been reported
in the literature \cite{aliferis06}.

A class of codes that have been found to be suitable for fault tolerant computing are the 
topological codes. These codes have local stabilizer generators, enabling the design of a local 
architecture for computing with them and also have the highest thresholds reported so far \cite{raussen07}. It is 
tempting to combine the benefits of these codes with the ideas of subsystem codes. This was first achieved in the work of Bombin \cite{bombin10}, followed by Suchara,  Bravyi  and Terhal \cite{suchara10}.

However, the code thresholds reported in \cite{suchara10} were lower than the
thresholds of the toric codes and color codes. Nonetheless, this should not lead us to a hasty conclusion that the topological subsystem codes are not as good as the toric codes. There are at least
two reasons why topological subsystem codes warrant further investigation. Firstly, the threshold reported in \cite{suchara10} is about 2\% while \cite{andrist12} showed that the topological subsystem
codes can have a threshold as high as 5.5\%.   This motivates the
further study on decoding topological subsystem codes that are closer to their theoretical limits
as well as the study of subsystem codes that have higher code thresholds. 

The second point that must be borne in mind is the rather surprising lower circuit threshold of color 
code on the square  octagon lattice as compared to the toric codes. Both of these codes have a code 
threshold of about 11\%. But the circuit threshold of the color codes is about an order of magnitude 
lower than that of  the toric codes. Both codes enable local architectures for fault tolerant quantum computing, both architectures realize gates by code 
deformation techniques, and both achieve universality in quantum computation through magic state 
distillation. Moreover, the color codes considered in \cite{landahl11} unlike the surface code can even 
realize the entire Clifford group transversally. Despite this apparent advantage over the toric codes, 
the color codes lose out to the surface codes in one crucial aspect---the weight of the check operators. 
Some of the check operators for the square octagon color code have a
weight that is twice the weight of the check operators in the toric codes. Even though these higher weight check operators are approximately a fifth of the operators, they appear to be the dominant reason for the lower circuit threshold of the color codes. 

The preceding discussion indicates that measurement errors can severely undermine the performance of a code with many good properties including a good code threshold.
Thus  any improvement in circuit techniques or error recovery schemes to make the circuits
more robust to these errors are likely to yield significant improvements in the circuit thresholds. 
This is precisely where topological subsystem codes come into picture. Because they can be designed to 
function with just two-body  measurements,  these codes can greatly mitigate the detrimental effects of
measurement errors. A  strong case  in favor of the suitability of the subsystem codes
with current quantum information technologies has  already been made in \cite{suchara10}. 

For all these reasons topological subsystem codes are worth further investigation. 
This work is aimed at realizing the potential of topological subsystem codes.  Our main contribution
in this paper is to give  large classes of topological subsystem codes, which were not previously
known in literature. 
Our results put at our disposal a huge arsenal of topological subsystem codes, which aids in the 
evaluation of their promise for fault tolerant quantum computing. In addition to building upon the 
work of \cite{suchara10} it also sheds light on color codes, an area of independent interest.

The paper is structured as follows. After reviewing the necessary background on subsystem codes
in Section~\ref{sec:bg},  we give our main results in Section~\ref{sec:tsc}. Then in 
Section~\ref{sec:decoding} we show how to measure the stabilizer for the proposed  codes in a consistent  fashion. 
We conclude with a brief discussion on the significance of these results in Section~\ref{sec:summary}.

\section{Background and Previous Work}\label{sec:bg}
\subsection{Subsystem codes}
In the standard model of quantum error-correction, information is protected by encoding it into a
subspace of the system Hilbert space. 
In the subsystem model of error correction \cite{bacon06a,kribs05, kribs05b,kribs05c,poulin05,ps08}, the  subspace is further decomposed as 
$L\otimes G$.  The subsystem $L$  encodes the logical information, while the subsystem $G$ provides 
additional degrees of freedom; it is also called the gauge subsystem and said to encode the gauge qubits.    The notation  $[[n,k,r,d]]$ is used to denote a subsystem code on $n$ qubits, with $\dim L= 2^k$ and $\dim G= 2^r$ and able to detect errors of weight up to $d-1$ on the 
subsystem $L$.  In this model  an $[[n,k,d]]$ quantum code is the same as an  $[[n,k,0,d]]$  subsystem code.

The introduction of the gauge subsystem allows us to simplify the error 
recovery schemes \cite{bacon06a,aliferis06} since errors that act only on the gauge subsystem need not be corrected.  Although sometimes this comes at the expense of a reduced encoding rate, nonetheless as in the case of the Bacon-Shor code, this can substantially improve the performance with respect to the corresponding stabilizer code associated with it without affecting the rate \cite{bacon06a}.

We assume that the reader is familiar with the stabilizer formalism for quantum codes \cite{calderbank98, gottesman97}. We briefly review it for the subsystem codes \cite{poulin05,ps08}. A subsystem code is defined by a (nonableian) subgroup of the Pauli group;
it is called the gauge group $\mc{G}$ of the subsystem code. We denote by $S'=Z(\mc{G})$, the centre of $\mc{G}$. Let $\langle i{\bf I}, S\rangle =S'$. The subsystem code is simply the space stabilized by $S$. (Henceforth, we shall ignore phase factors and  let $S$ be equivalent to $S'$.) 
Henceforth, we shall ignore the phase factors and let $S$.
The bare logical operators of the code are given by the elements in $C(\mc{G})$, the centralizer of $\mc{G}$. (We view the identity also as a logical operator.) These logical operators do not act on the gauge subsystem but only on the information subsystem. The operators in $C(S)$ are called dressed logical operators and in general they also act on the gauge subsystem as well.
For an $[[n,k,r,d]]$ subsystem code, with the stabilizer dimension $\dim S = s$, we have the following relations:
\ben
n&= & k+r+s,\\
\dim \mc{G} &= &2r+s,\label{eq:dim-G}\\
\dim C(\mc{G})  & = & 2k+s,\label{eq:dim-CG}\\
d & = & \min \{\wt(e) \mid e \in C(Z(\mc{G}))\setminus \mc{G} \}.\label{eq:distance}
\een
The notation $\wt(e)$ is used to denote the number of qubits on which the error $e$ acts nontrivially.

\subsection{Color codes}

In the discussion on topological codes, it is tacitly assumed that the code is associated to a graph 
which is embedded on some suitable surface. 
Color codes \cite{bombin06} are a class of topological codes derived from 3-valent graphs with the additional property that they are 3-face-colorable. Such graphs are called 2-colexes. The stabilizer of the color code
associated to such a 2-colex is generated by operators defined as follows:
\ben
B_{f}^{\sigma} = \prod_{i\in f} \sigma_i,  \sigma \in \{X,Z \},
\een
where $f$ is a face of $\Gamma_2$.
 A method to construct 2-colexes from standard graphs was proposed in \cite{bombin07b}. Because of its relevance for us we
briefly review it here. 
\renewcommand{\thealgorithm}{\Alph{algorithm}}

\renewcommand{\algorithmicrequire}{\textbf{Input:}}
\renewcommand{\algorithmicensure}{\textbf{Output:}}
\begin{algorithm}[H]
 \floatname{algorithm}{Construction}
\caption{{\ensuremath{\mbox{ Topological color code construction}}}}\label{proc:tcc-bombin}
\begin{algorithmic}[1]
\REQUIRE {An arbitrary graph $\Gamma$.}
\ENSURE {A 2-colex $\Gamma_2$.}
\STATE Color each face of the embedding by $x\in \{r,b,g\}$.
\STATE Split each edge into two edges and color the face by $y\in \{ r,b,g\}\setminus x$ as shown below.
\begin{center}
\begin{tikzpicture}
\draw [color=black,  thick](0,0)--(1,0);

\draw[fill =gray] (0,0) circle (2pt);
\draw[fill =gray] (1,0) circle (2pt);

\draw [color=black, fill=green!30, thick](2,0)..controls (2.5,0.25)..(3,0)--(3,0)..controls (2.5,-0.25)..(2,0);
\draw[fill =gray] (2,0) circle (2pt);
\draw[fill =gray] (3,0) circle (2pt);

\end{tikzpicture}
\end{center}

\STATE Transform  each vertex of degree $d$ into a face containing $d$  edges and color it 
$z\in  \{r,b,g \} \setminus \{ x,y\}$.
Denote this graph by $\Gamma_2$.
\begin{center}
\begin{tikzpicture}
\draw[fill=blue!30, color=blue!30] (1,0)--(-0.5,sin{60})--(-0.5,-sin{60});

\draw [color=black, fill=green!30, thick](0,0)..controls (0.5,0.25)..(1,0)--(1,0)..controls (0.5,-0.25)..(0,0);
\draw[fill =gray] (0,0) circle (2pt);
\draw[fill =gray] (1,0) circle (2pt);

\draw [color=black, fill=green!30, thick, rotate=120](0,0)..controls (0.5,0.25)..(1,0)--(1,0)..controls (0.5,-0.25)..(0,0);

\draw [color=black, fill=green!30, thick, rotate=-120](0,0)..controls (0.5,0.25)..(1,0)--(1,0)..controls (0.5,-0.25)..(0,0);

\draw[fill =gray] (0,0) circle (2pt);
\draw[fill =gray, rotate=120] (1,0) circle (2pt);
\draw[fill =gray,rotate=-120] (1,0) circle (2pt);

\draw [fill=red] (3,0)+(-30:0.5)--+(30:0.5)--+(90:0.5)--+(150:0.5)--+(210:0.5)--+(270:0.5)--+(-30:0.5);
\foreach \i in {0,120,...,300} 
{
\draw [color=green!30,fill=green!30] (3,0) +(\i-30:0.5)--+(\i-15:1) -- +(\i+15:1)-- +(\i+30:0.5)-- +(\i-30:0.5) ;
\draw [color=red, thick](3,0) +(\i+15:1)--+(\i+30:0.5);
\draw [color=red, thick](3,0) +(\i-30:0.5)--+(\i-30+15:1);
\draw [color=green, thick](3,0) +(\i+15:1)--+(\i+15:1.5);
\draw [color=green, thick](3,0) +(\i-30+15:1.5)--+(\i-30+15:1);
\draw[color=blue, thick] (3,0) +(\i+15:1)--+(\i-15:1);
\draw [fill  =gray] (3,0) +(\i+15:1) circle (2pt);
\draw [fill  =gray] (3,0) +(\i-15:1) circle (2pt);

}
\foreach \i in {0,120,...,240}
{
\draw [color=blue, thick](3,0) +(\i-30:0.5)--+(\i+30:0.5);
\draw [color=green, thick](3,0) +(\i+30:0.5)--+(\i+90:0.5);
}

\foreach \i in {0,60,...,300} 
{
\draw [fill  =gray] ((3,0) +(\i+30:0.5) circle (2pt);
}

\end{tikzpicture}
\end{center}
\end{algorithmic}
\end{algorithm}

Notice that in the above construction, every vertex, face and edge in $\Gamma$ lead to a face in 
$\Gamma_2$. Because of this correspondence, we shall call a face in $\Gamma_2$ a $v$-face if its parent
in $\Gamma$ was a vertex, a $f$-face if its parent was a face and an $e$-face if its parent 
was an edge. Note that an $e$-face is always 4-sided.

\subsection{Topological subsystem codes via color codes}

At the outset it is fitting to distinguish topological subsystem codes from non-topological codes
such as the Bacon-Shor codes that are nonetheless local. A more precise definition can be found in 
\cite{bravyi09,bombin10}, but for our purposes it suffices to state it in the following terms. 
\begin{compactenum}[(i)]
\item  The stabilizer  $S$ (and the gauge group) have local generators and $O(1)$ support. 
\item  Errors in  $C(S)$ that have a trivial homology on the surface are in the stabilizer,
while the undetectable errors have a nontrivial homology on the surface. 
\end{compactenum}

We denote the vertex set and edge set of a graph $\Gamma$ by $V(\Gamma)$, $E(\Gamma)$ respectively.
We denote the set of edges incident on a vertex $v$ by $\delta(v)$ and the edges that constitute the
boundary of a face by $\partial(f)$. We denote the Euler characteristic of a graph by $\chi$, where
$\chi=  |V(\Gamma)| -| E(\Gamma)|+|F(\Gamma)|$.
The dual of a graph is the graph obtained by replacing 
 every face $f$ with a vertex $f^\ast$, and for every edge in the boundary of two faces $f_1$
 and $f_2$, creating a dual edge connecting $f_1^\ast$ and $f_2^\ast$. 
The subsystem code construction due to \cite{bombin10} takes the dual of a  2-colex, 
 and modifies it to obtain a subsystem code. 
 The procedure is outlined below:

\begin{algorithm}[H]
 \floatname{algorithm}{Construction}
\caption{{\ensuremath{\mbox{ Topological subsystem code construction}}}}\label{proc:tsc-bombin}
\begin{algorithmic}[1]
\REQUIRE {An arbitrary 2-colex $\Gamma_2$.}
\ENSURE{Topological subsystem code. }
\STATE Take the dual of $\Gamma_2$. It is a 3-vertex-colorable graph. 
\STATE Orient each edge as a directed edge as per the following:

\begin{center}
\begin{tikzpicture}
\draw [color=black, ->, thick](0,0)--(1,0);
\draw[fill =red] (0,0) circle (2pt);
\draw[fill =blue] (1,0) circle (2pt);

\draw [color=black, ->, thick](2,0)--(3,0);
\draw[fill =blue] (2,0) circle (2pt);
\draw[fill =green] (3,0) circle (2pt);

\draw [color=black, ->, thick](4,0)--(5,0);
\draw[fill =green] (4,0) circle (2pt);
\draw[fill =red] (5,0) circle (2pt);

\end{tikzpicture}
\end{center}

\STATE Transform  each (directed) edge into a 4-sided face.
\begin{center}
\begin{tikzpicture}
\draw [color=black,  ->, thick](0,0)--(1,0);
\draw[fill =gray] (0,0) circle (2pt);
\draw[fill =black] (1,0) circle (2pt);

\draw [color=blue, thick] (1.5,0.5)--(2.5,0.5); 
\draw [color=blue, thick] (1.5,-0.5)--(2.5,-0.5);
\draw[color=red, ultra thick] (1.5,0.5)--(1.5,-0.5);
\draw[color=green, ultra thick] (2.5,0.5)--(2.5,-0.5);
\draw[color=gray, ->] (1.25,0) -- (2.75,0);

\draw[fill =gray] (1.5,0.5) circle (2pt);
\draw[fill =gray] (1.5,-0.5) circle (2pt);
\draw[fill =gray] (2.5,0.5) circle (2pt);
\draw[fill =gray] (2.5,-0.5) circle (2pt);

\end{tikzpicture}
\end{center}

\STATE Transform each vertex into  a face with as many sides as its degree. (The preceding splitting of edges implicitly accomplishes this. Each of these faces has a boundary of alternating blue and red edges.) Denote this expanded graph as $\overline{\Gamma}$.
\begin{center}
\begin{tikzpicture}
\foreach \i in {0,60,...,300} 
{
\draw [color=gray](0,0)--(\i:1);
}
\draw [fill  =gray] (0,0) circle (2pt);

\foreach \i in {0,120,...,240}
{
\draw [color=green, thick](3,0) +(\i-30:0.5)--+(\i+30:0.5);
\draw [color=red, thick](3,0) +(\i+30:0.5)--+(\i+90:0.5);
}

\foreach \i in {0,60,...,300} 
{
\draw [color=gray](3,0)--+(\i:1);
\draw [color=blue](3,0) +(\i+30:0.5)--+(\i+atan{0.25}:1);
\draw [color=blue](3,0) +(\i+30:0.5)--+(\i+60-atan{0.25}:1);
\draw [fill  =gray] ((3,0) +(\i+30:0.5) circle (2pt);
}

\end{tikzpicture}
\end{center}
\STATE With every edge $e=(u,v)$, associate a link operator  $\overline{K}_e\in \{X_u  X_v, Y_u Y_v, Z_u  Z_v \}$  depending on the color of the edge.
\STATE The gauge group is  given by $\mc{G} = \langle \overline{K}_e \mid e\in E(\overline{\Gamma})\rangle$.
\end{algorithmic}
\end{algorithm}

Our presentation slightly differs from that of \cite{bombin10} with respect to step 2.


\begin{theorem}[\cite{bombin10}]
Let $\Gamma_2$ be  a 2-colex  embedded on a surface of genus $g$. The 
subsystem code derived from $\Gamma_2$ via Construction~\ref{proc:tsc-bombin} has
 the following parameters:
 \ben
 [[3|V(\Gamma_2)|,2g,2|V(\Gamma_2)|+2g-2,d\geq \ell^\ast]],\label{eq:tsc-bombin}
 \een
 where $\ell^\ast$ is the length of smallest nontrivial cycle in $\Gamma_2^\ast$.
\end{theorem}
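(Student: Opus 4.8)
The plan is to read off $n$, $s=\dim S$, $r$, and $k$ from the combinatorics of the expanded graph $\overline{\Gamma}$, using the Euler characteristic of the surface to introduce the genus, and then to establish the distance bound by a separate homological argument. The qubit count comes first. Since $\Gamma_2$ is $3$-valent we have $2|E(\Gamma_2)| = 3|V(\Gamma_2)|$, and dualizing preserves edges, so $|E(\Gamma_2^\ast)| = \tfrac{3}{2}|V(\Gamma_2)|$. Tracking how Construction~\ref{proc:tsc-bombin} fattens each edge of $\Gamma_2^\ast$ into a $4$-sided face and places the qubits on the vertices of $\overline{\Gamma}$, I would check that each edge of $\Gamma_2^\ast$ contributes two qubits (corners being shared by the two squares adjacent around a vertex-polygon), giving $n = 2|E(\Gamma_2^\ast)| = 3|V(\Gamma_2)|$.

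Next I would determine the gauge group $\mc{G}$ and its centre $S = Z(\mc{G})$. The generators are the link operators $\overline{K}_e$, one per edge of $\overline{\Gamma}$, and an Euler-characteristic count over the three face types of $\overline{\Gamma}$ (coming from the vertices, edges and faces of $\Gamma_2^\ast$) gives $|E(\overline{\Gamma})| = 6|V(\Gamma_2)|$ generators. The central step is to identify $S$: I would show that a product $\prod_e \overline{K}_e$ taken around a closed loop commutes with every link operator exactly when it is one of the plaquette operators of $\overline{\Gamma}$, verify commutativity qubit-by-qubit from the local $4$-valent structure, and then count the independent plaquette generators. Here the topology enters: the plaquettes satisfy local relations at each qubit together with a fixed number of global relations, and bookkeeping these via $\chi(\Gamma_2) = |V(\Gamma_2)| - |E(\Gamma_2)| + |F(\Gamma_2)| = 2-2g$ should yield $s = 2|F(\Gamma_2)| - 2 = |V(\Gamma_2)| - 4g + 2$. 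Counting the independent relations among the $6|V(\Gamma_2)|$ generators likewise gives $\dim \mc{G} = 5|V(\Gamma_2)| - 2$, after which $r = \tfrac12(\dim \mc{G} - s) = 2|V(\Gamma_2)| + 2g - 2$ and $k = n - r - s = 2g$ follow by arithmetic; the value $k = 2g$ is exactly the dimension of $H_1$ of the surface, reflecting that the bare logical operators are indexed by its nontrivial homology classes.

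For the distance I would use the characterization $d = \min\{\wt(e) : e \in C(S)\setminus \mc{G}\}$. By the defining property (ii) of a topological code, any $e \in C(S)$ with trivial homology lies in $S \subseteq \mc{G}$; hence every undetectable logical operator $e \in C(S)\setminus \mc{G}$ represents a nontrivial homology class. I would then show that the support of such an $e$ must carry a homologically nontrivial cycle of $\Gamma_2^\ast$, and that the operator acts nontrivially on at least one qubit for each edge traversed, so that $\wt(e)$ is at least the length of that cycle, which is at least $\ell^\ast$. This establishes $d \geq \ell^\ast$.

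The main obstacle is the centre computation in the second step: because the $\overline{K}_e$ do not pairwise commute, deciding precisely which closed-loop products land in $Z(\mc{G})$ and correctly enumerating both the local and the global relations---so as to pin down the genus-dependent constants in $s$ and $\dim \mc{G}$---is delicate, and it is this accounting, rather than the qubit count or the distance argument, that carries the real content of the theorem.
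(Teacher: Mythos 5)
First, a point of reference: the paper does not prove this theorem at all --- it is quoted from Bombin's work \cite{bombin10} and serves only as a benchmark for the new constructions --- so there is no in-paper argument to compare yours against, and your proposal must stand on its own.

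As a standalone argument it is an outline whose load-bearing steps are asserted rather than carried out. The arithmetic skeleton is consistent with the claimed parameters: $n=3|V(\Gamma_2)|$ from the expanded graph, $6|V(\Gamma_2)|$ link-operator generators, $s=2|F(\Gamma_2)|-2$, $\dim\mc{G}=5|V(\Gamma_2)|-2$, and then $r$ and $k$ follow. But you never actually determine which closed-loop products lie in $Z(\mc{G})$, never exhibit the relations among the generators (the $|V(\Gamma_2)|$ triangle relations $Z_aZ_b\cdot Z_bZ_c\cdot Z_cZ_a=I$ plus the global ones) that reduce $6|V(\Gamma_2)|$ to $5|V(\Gamma_2)|-2$, and never prove that exactly two dependencies hold among the plaquette operators --- the analogue of what the paper must establish for its own codes in Eqs.~\eqref{eq:vfaceDep0}--\eqref{eq:vfaceDep2} and Lemma~\ref{lm:rankH-1}. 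You concede this is ``the real content of the theorem,'' which is exactly the problem: it is deferred, not done. The distance step has a second, independent gap: you invoke property (ii) of the paper's definition of a topological subsystem code (trivially-homologous elements of $C(S)$ lie in the stabilizer) as if it were given, but for Bombin's construction that property is itself something to be proved, so the argument as written is circular; and even granting it, since $d=\min\{\wt(e): e\in C(Z(\mc{G}))\setminus\mc{G}\}$ ranges over all dressed representatives, you must show that multiplying by gauge elements cannot push the weight of a homologically nontrivial operator below $\ell^\ast$ --- the step the paper's own Lemma~\ref{lm:tsc-distance} grapples with for its codes --- and your sketch does not address this.
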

The cost of the two-body measurements is reflected to some extent in the increased overhead for the
subsystem codes. 
Comparing with the parameters of the color codes, this construction uses three times as many qubits as 
the associated color code while at the same time encoding half the number of qubits. Our codes
offer a different tradeoff between the overhead and distance. 

\subsection{Subsystem codes from 3-valent hypergraphs}
In this section we review a  general construction for (topological) subsystem codes based on hypergraphs proposed in \cite{suchara10}.  A hypergraph $\Gamma_h$ is an ordered pair $(V,E)$, where $E\subseteq 2^V$ is a collection of 
subsets of $V$. The set $V$ is called the vertex set while $E$ is called the edge set. If all the 
elements of $E$ are subsets of size 2, then $\Gamma_h$ is a standard graph. Any element of $E$ whose size is greater than 2 is
called a hyperedge and its rank is its size. The rank of a hypergraph is the maximum rank of its 
edges. A hypergraph is said to be of degree $k$ if at every site $k$ edges   are incident on it.

A hypercycle in a hypergraph is a set of edges such that on every vertex in the support of these edges
an even number of edges are incident \footnote{ There are various other definitions of hypercycles, see 
for instance \cite{duke85} for an overview.}. Note that this definition of hypercycle includes the 
standard cycles consisting of rank-2 edges. A hypercycle is said to have trivial homology if we can
contract it to a point, by contracting its edges. Homological equivalence of cycles is somewhat more complicated 
than in standard graphs.

The following construction is due to \cite{suchara10}. Let $\Gamma_h$ be a hypergraph satisfying the following conditions: 
\begin{compactenum}[H1)]
\item $\Gamma_h$ has only rank-2 and rank-3 edges. 
\item Every vertex is trivalent.
\item Two edges intersect at most at one vertex\footnote{Condition H3 implies that the hypergraphs that we are interested are also reduced hypergraphs. A reduced hypergraph is one in which no edge is a subset of another edge. 
}. 
\item Two rank-3 edges are disjoint. 
\end{compactenum}

We assume that at every vertex there is a qubit.  For each rank-2 edge $e=(u,v)$  define a link 
operator $K_e$ where  
$K_e\in\{ X_u  X_v, Y_u  Y_v, Z_u   Z_v\}$ and for each rank-3 edge $(u,v,w)$ 
 define 
\ben K_e= Z_u  Z_v   Z_w.\label{eq:rank3LinkOp}
\een The assignment of these link operators is such that 
\ben
K_e K_{e'}= (-1)^{|e\cap e'|} K_{e'}K_e. \label{eq:commuteRelns}
\een
We denote the cycles of $\Gamma_h$  by $\Sigma_{\Gamma_h}$. 
Let  $\sigma$ be a hypercycle in $\Gamma_h$, then we associate a (cycle) operator $W(\sigma)$
to it as follows:
\begin{eqnarray}
W(\sigma)& = &\prod_{e\in \sigma} K_e\label{eq:loopOperator}.
\end{eqnarray}
The group of these cycle operators is denoted $\mc{L}_{\Gamma_h} $ and defined as 
\begin{eqnarray}
\mc{L}_{\Gamma_h} &=& \langle W(\sigma)\mid \sigma \mbox{ is a hypercycle in } \Gamma_h \rangle \label{eq:cycleGroup}
\end{eqnarray}
It is immediate that $\dim \mc{L}_{\Gamma_h} = \dim \Sigma_{\Gamma_h}$.

\begin{algorithm}[H]
 \floatname{algorithm}{Construction}
\caption{{\ensuremath{\mbox{ Topological subsystem code via hypergraphs}}}}\label{proc:tsc-suchara}
\begin{algorithmic}[1]
\REQUIRE {A hypergraph $\Gamma_h$ satisfying assumptions H1--4}
\ENSURE{A subsystem code specified by its gauge group $\mc{G}$. }
\STATE Color all the rank-3 edges, say with $r$. Then assign a 3-edge-coloring of $\Gamma_h$ using $\{r,g,b\}$.
\STATE Define a graph $\overline{\Gamma}$ whose vertex set is same as $\Gamma_h$.
\STATE For each rank-2 edge $(u,v)$ in $\Gamma_h$ assign an edge $(u,v)$ in $\overline{\Gamma}$ and
a link operator $\overline{K}_{u,v}=K_{u,v}$ as 
\be
\overline{K}_{u,v} =\left\{ \ba{cl} X_u X_v & (u,v) \text{ is }r\\
Y_u Y_v & (u,v) \text{ is }  g\\
Z_u Z_v & (u,v) \text{ is } b \ea\right.
\ee
\STATE For each rank-3 edge $(u,v,w)$ assign three edges in $\overline{\Gamma}$, namely, $(u,v), (v,w), (w,u)$ and  three link operators $\overline{K}_{u,v}=Z_uZ_v$,  $\overline{K}_{v,w}=Z_vZ_w$, and $\overline{K}_{w,u}=Z_wZ_u$.

\STATE Define the gauge group $\mc{G} = \langle \overline{K}_e \mid e\in \overline{\Gamma}\rangle $.
\end{algorithmic}
\end{algorithm}

\begin{theorem}[\cite{suchara10}]\label{th:suchara-Const}
A hypergraph $\Gamma$ satisfying the conditions H1-4, leads to a subsystem code whose gauge group is
the centralizer of $\Sigma_{\Gamma_h}$, i.e.,  $\mc{G} = C(\mc{L}_{\Gamma_h})$. 
\end{theorem}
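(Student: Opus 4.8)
The plan is to pass to the symplectic $\F_2$-picture, where (modulo phases) a Pauli subgroup is a subspace of $\F_2^{2n}$ with $n=|V(\Gamma_h)|$, and taking the centralizer $C(\cdot)$ amounts to taking the symplectic dual. In this language $\dim C(\mc{L}_{\Gamma_h}) = 2n-\dim\mc{L}_{\Gamma_h}=2n-\dim\Sigma_{\Gamma_h}$, where I have used the stated identity $\dim\mc{L}_{\Gamma_h}=\dim\Sigma_{\Gamma_h}$. Since a containment between two subspaces of equal dimension is automatically an equality, it suffices to establish (i) the containment $\mc{G}\subseteq C(\mc{L}_{\Gamma_h})$ and (ii) the count $\dim\mc{G}=2n-\dim\Sigma_{\Gamma_h}$; together these give $\mc{G}=C(\mc{L}_{\Gamma_h})$.

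For (i) I would show each generator $\overline K_e$ commutes with each cycle operator $W(\sigma)=\prod_{e'\in\sigma}K_{e'}$. Pushing $\overline K_e$ through the product contributes a sign $(-1)^{t}$, where $t$ counts the factors $K_{e'}$, $e'\in\sigma$, anticommuting with $\overline K_e$, so it suffices to show $t$ is even. For a rank-$2$ generator $\overline K_e=K_e$, relation \eqref{eq:commuteRelns} gives $t\equiv\sum_{e'\in\sigma}|e\cap e'|=\sum_{v\in e}\deg_\sigma(v)\equiv 0$, since every vertex has even degree in a hypercycle. For a rank-$3$-derived generator $\overline K_{u,v}=Z_uZ_v$ coming from a hyperedge $f=(u,v,w)$, I would note that $\overline K_{u,v}$ anticommutes with $K_{e'}$ exactly when $K_{e'}$ is of $X$- or $Y$-type at an odd number of $\{u,v\}$. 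Because $f$ is the unique $Z$-type link at each of $u,v$ (trivalence forces the other two links there to be of $X/Y$-type) and, by H3, no edge other than $f$ contains both $u$ and $v$, the factor $f$ never contributes while each non-$f$ $\sigma$-edge at $u$ or at $v$ contributes one; hence $t\equiv(\deg_\sigma(u)-[f\in\sigma])+(\deg_\sigma(v)-[f\in\sigma])\equiv\deg_\sigma(u)+\deg_\sigma(v)\equiv 0$. Thus $\mc{G}\subseteq C(\mc{L}_{\Gamma_h})$.

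For (ii) I would read $\dim\mc{G}$ off the $\F_2$ symplectic matrix of the generators. As the graph is trivalent and $3$-typed, the $X$-type and $Y$-type edges each form a perfect matching of $V$, so their $X$-parts span a subspace of dimension $n-c$, where $c$ is the number of (even) cycles in the union of the two matchings; row-reducing these against the purely $Z$-type rows — the rank-$2$ $Z$-edges together with, for each rank-$3$ edge, the operators $Z_uZ_v,Z_vZ_w,Z_wZ_u$ (spanning only a $2$-dimensional space, since their product is $I$) — determines $\dim\mc{G}$. On the other side $\dim\Sigma_{\Gamma_h}=|E|-\rk M$, with $M$ the vertex–edge incidence map over $\F_2$. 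Feeding in the trivalence identity $3n=2m_2+3m_3$ (with $m_2,m_3$ the numbers of rank-$2$ and rank-$3$ edges) and matching the even-subgraph contributions to $\mc{G}$ against the row space of $M$, I would verify $\dim\mc{G}+\dim\Sigma_{\Gamma_h}=2n$, closing the argument with (i).

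The bookkeeping in (ii) is the step I expect to be the \emph{main obstacle}: the entire count rests on showing that the only relations among the generators are the evident ones — the rank-$3$ triangle relations $\overline K_{u,v}\overline K_{v,w}\overline K_{w,u}=I$ and the alternating $X/Y$ matching cycles — with no hidden dependencies that would inflate $\dim\mc{G}$. This is exactly where H3 and H4 are indispensable: H4 makes the triangles attached to distinct rank-$3$ edges vertex-disjoint, and H3 keeps any two links from sharing two vertices, so that distinct generators never collapse and $\rk M$ and $\dim\mc{G}$ take precisely the values that make the total equal $2n$.
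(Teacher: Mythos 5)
First, a caveat on the comparison itself: the paper does not prove this theorem --- it is quoted from \cite{suchara10} and used as a black box --- so there is no in-paper proof to measure you against, and I assess your attempt on its own terms. Your part (i), the inclusion $\mc{G}\subseteq C(\mc{L}_{\Gamma_h})$, is essentially correct: the parity count $t\equiv\sum_{v\in e}\deg_\sigma(v)\equiv 0$ for rank-2 generators is exactly right, and the triangle-generator case goes through, though the reason the other two links at a covered vertex are of $X/Y$-type is not trivalence alone but Eq.~\eqref{eq:commuteRelns} applied to the rank-3 edge itself (the rank-3 link operator $Z_uZ_vZ_w$ must anticommute with each rank-2 edge meeting it in one vertex, which forces those edges to be non-$Z$ there); you should say so, since this is precisely the constraint that makes the whole assignment consistent.

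The genuine gap is part (ii). It is a plan, not a proof, and the plan as stated would fail: you assert that the only relations among the generators of $\mc{G}$ are the triangle relations and the ``alternating $X/Y$ matching cycles,'' but the latter are not relations at all --- the product of the link operators around a cycle alternating between $X$-type and $Y$-type edges contributes $X_uY_u\propto Z_u$ at each vertex, so it equals a nonidentity $Z$-type operator (indeed a stabilizer element), not $I$. The true relation space is governed by a more delicate local condition (at a vertex covered by a rank-3 edge the admissible local patterns are $\emptyset$, the two triangle legs, or $X$, $Y$ and exactly one triangle leg), and characterizing it globally is essentially as hard as the theorem itself; your dimension count therefore rests on an unproven and partly incorrect inventory of dependencies. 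The route that actually closes the argument --- and the one taken in \cite{suchara10} --- is to prove the reverse inclusion $C(\mc{G})\subseteq\mc{L}_{\Gamma_h}$ directly by a local analysis: any Pauli $P$ commuting with all link operators is determined at each vertex by the (necessarily even-size) set of incident link operators with which it anticommutes; the edge set $\tau=\{e:\text{$P$ anticommutes with $K_e$}\}$ is then well defined, has even degree at every vertex, hence is a hypercycle, and $P\propto W(\tau)$. Combined with your part (i) this gives $\mc{G}=C(\mc{L}_{\Gamma_h})$ with no dimension bookkeeping at all.
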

Since $S=\mc{G}\cap C(\mc{G})$, a subgroup of cycles corresponds to the stabilizer. Let us denote
this subgroup of cycles  by $\Delta_{\Gamma_h}$. 
Note that we have slightly simplified the construction proposed in \cite{suchara10}, in that
we let our our link operators to be only $\{X\otimes X, Y\otimes Y, Z\otimes Z \}$. But we
expect that this results in no loss in performance, because the number of encoded qubits and the distance are topological invariants and are not affected by these choices. 

Our notation is slightly different
from that of \cite{suchara10}. We distinguish between the link operators associated with the 
hypergraph $\Gamma_h$ and the derived graph $\overline{\Gamma}_h$; they coincide for the rank-2 edges. 
Because the hypergraph is 3-edge-colorable, we can partition the edge set of the hypergraph as
$E(\Gamma_h) = E_r\cup E_g\cup E_b$ depending on the color. The derived graph $\overline{\Gamma}_h$
is not 3-edge-colorable, but we group its edges by the edges of the parent edges in $\Gamma_h$.
Thus we can partition the edges of $\overline{\Gamma}_h$ also in terms of color as 
$E(\overline{\Gamma}_h) = \overline{E}_r\cup \overline{E}_g\cup \overline{E}_b$.

This following result is a consequence of  the definitions of $\mc{G}$, $\Sigma_{\Gamma_h}$ and Theorem~\ref{th:suchara-Const}.  
\begin{corollary}\label{co:rank2Cycle}
If $\sigma$ is a cycle in $\Gamma_h$ and
consists of only rank-2 edges, then $W(\sigma)\in S$.
\end{corollary}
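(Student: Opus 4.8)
The plan is to verify the two defining conditions for membership in the stabilizer directly from the relation $S=\mc{G}\cap C(\mc{G})$. Thus I would show separately that $W(\sigma)\in\mc{G}$ and that $W(\sigma)\in C(\mc{G})$, and then conclude $W(\sigma)\in\mc{G}\cap C(\mc{G})=S$. The hypothesis that $\sigma$ uses only rank-2 edges enters in exactly one of these two steps, and isolating where it is needed is the crux of the argument.

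For the first condition I would use the fact that on a rank-2 edge the hypergraph link operator $K_e$ coincides with the derived-graph link operator $\overline{K}_e$ of Construction~\ref{proc:tsc-suchara}. Since $\sigma\in\Sigma_{\Gamma_h}$ consists solely of rank-2 edges, the cycle operator factors as $W(\sigma)=\prod_{e\in\sigma}K_e=\prod_{e\in\sigma}\overline{K}_e$, which is a product of generators of $\mc{G}=\langle \overline{K}_e\mid e\in E(\overline{\Gamma})\rangle$. Hence $W(\sigma)\in\mc{G}$. I expect this to be the only place the rank-2 hypothesis is genuinely used, and therefore the step I would treat most carefully: if $\sigma$ contained a rank-3 edge $(u,v,w)$, its hypergraph link operator $K_e=Z_uZ_vZ_w$ is split in $\overline{\Gamma}$ into the three weight-two operators $Z_uZ_v$, $Z_vZ_w$, $Z_wZ_u$, and there is no reason for the odd-weight operator $Z_uZ_vZ_w$ to lie in $\mc{G}$. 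This is precisely the obstacle that the hypothesis removes.

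For the second condition I would avoid a direct commutator computation against the generators of $\mc{G}$. Such a computation is painless for the rank-2 generators—the phase picked up against $K_f$ is $\sum_{v\in f}d_\sigma(v)\bmod 2=0$ because every vertex has even degree in a hypercycle—but it is awkward for the split rank-3 generators. Instead I would argue abstractly: by Theorem~\ref{th:suchara-Const}, $\mc{G}=C(\mc{L}_{\Gamma_h})$, so every element of $\mc{G}$ commutes with every element of $\mc{L}_{\Gamma_h}$; since commutation of Pauli operators is symmetric, this gives $\mc{L}_{\Gamma_h}\subseteq C(\mc{G})$. As $\sigma$ is a hypercycle, $W(\sigma)\in\mc{L}_{\Gamma_h}$ by definition, whence $W(\sigma)\in C(\mc{G})$. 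Combining the two conditions yields $W(\sigma)\in\mc{G}\cap C(\mc{G})=S$, which is the claim; in the language of the cycle subgroup this says $\sigma\in\Delta_{\Gamma_h}$.
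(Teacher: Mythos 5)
Your proposal is correct and matches the paper's intended argument: the paper gives no explicit proof, stating only that the corollary "is a consequence of the definitions of $\mc{G}$, $\Sigma_{\Gamma_h}$ and Theorem~\ref{th:suchara-Const}," which unpacks to exactly your two steps (rank-2 link operators are generators of $\mc{G}$, and $\mc{G}=C(\mc{L}_{\Gamma_h})$ forces $\mc{L}_{\Gamma_h}\subseteq C(\mc{G})$ by symmetry of Pauli commutation). Your version simply makes explicit what the paper leaves implicit, including the correct observation that the rank-2 hypothesis is needed only for membership in $\mc{G}$.
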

An obvious question posed by Theorem~\ref{th:suchara-Const} is how does one construct hypergraphs 
that satisfy these constraints. This question will occupy us in the next section.  A related question 
is the syndrome measurement schedule for the associated subsystem code. This will be addressed in 
Section~\ref{sec:decoding}.

\renewcommand{\thetheorem}{\arabic{theorem}}
\setcounter{theorem}{0}

\section{Proposed topological codes}\label{sec:tsc}

\subsection{Color codes}

While our main goal is to construct subsystem codes, our techniques use color codes as 
intermediate objects. The previously known methods \cite{bombin07b} for color codes do not exhaust all possible color codes. Therefore we make a brief digression to propose a new method to construct color codes.   Then we will return to the question of building subsystem codes. 

The constructions presented in this paper assume that the associated
graphs and hypergraphs are connected, have no loops and all embeddings are such that the faces are homeomorphic to unit discs, in other words, all our embeddings are 2-cell embeddings. 

\renewcommand{\thealgorithm}{\arabic{algorithm}}
\setcounter{algorithm}{0}

\renewcommand{\thealgorithm}{\arabic{algorithm}}

\begin{algorithm}[H]
 \floatname{algorithm}{Construction}
\caption{{\ensuremath{\mbox{ Topological color code construction}}}}\label{proc:tcc-new}
\begin{algorithmic}[1]
\REQUIRE {An arbitrary bipartite graph $\Gamma$.}
\ENSURE {A 2-colex $\Gamma_2$.}
\STATE Consider the embedding of the bipartite graph $\Gamma$ on some surface. Take the dual of $\Gamma$, denote it $\Gamma^\ast$.
\STATE Since $\Gamma$ is bicolorable, $\Gamma^\ast$ is a 2-face-colorable graph.
\STATE Replace every vertex of $\Gamma^\ast$ by a face with as many sides as its degree such that
every new vertex has degree 3.

\begin{center}
\begin{tikzpicture}
\foreach \i in {0,60,...,300} 
{
\draw [color=gray](0,0)--(\i+30:1);
}
\draw [fill  =gray] (0,0) circle (2pt);

\foreach \i in {0,120,...,240}
{
\draw [color=blue, ultra thick](3,0) +(\i-30:0.5)--+(\i+30:0.5);
\draw [color=red, ultra thick](3,0) +(\i+30:0.5)--+(\i+90:0.5);
}

\foreach \i in {0,60,...,300} 
{
\draw [color=green, ultra thick](3,0) +(\i+30:0.5)--+(\i+30:1);
\draw [color=black, fill  =gray] ((3,0) +(\i+30:0.5) circle (2pt);
}

\end{tikzpicture}
\end{center}
\STATE The resulting graph is a 2-colex. 

\end{algorithmic}
\end{algorithm}

\begin{theorem}[Color codes from bipartite graphs]\label{th:tcc-new}
Any 2-colex must be generated from Construction~\ref{proc:tcc-new} via  some bipartite graph.
\end{theorem}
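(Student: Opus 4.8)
```latex
\textbf{Proof proposal.}
The statement is a \emph{converse}: every 2-colex arises from Construction~\ref{proc:tcc-new}. So the plan is to start from an arbitrary 2-colex $\Gamma_2$ and reverse-engineer a bipartite graph $\Gamma$ whose image under the construction is $\Gamma_2$. The natural candidate is to undo the three operational steps of the construction. Recall that in Construction~\ref{proc:tcc-new} one fixes a color, say green, for the edges that connect the expanded ``vertex-faces'' to the rest of the graph, while the boundary of each vertex-face alternates between the other two colors (red and blue). Reading this backwards, I would first isolate one color class of $\Gamma_2$ and use it to recover the vertices of $\Gamma^\ast$, then recover $\Gamma^\ast$ itself, and finally take its dual to obtain $\Gamma$; bipartiteness of $\Gamma$ must then be verified.

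\emph{Step 1: identify the distinguished color class.} Since $\Gamma_2$ is a 2-colex, its faces are 3-colored with $\{r,b,g\}$; equivalently its edges are 3-colored so that each face is bounded by the two colors complementary to its own. I would designate one color, say $g$, as the ``spoke'' color. Each maximal connected subgraph formed by the $r$- and $b$-colored edges bounds a region I will call a $v$-face; because $\Gamma_2$ is trivalent and properly 3-edge-colored, exactly one $g$-edge emanates from each vertex, so every vertex lies on a unique $v$-face whose boundary alternates $r,b,r,b,\dots$ These $v$-faces are precisely the expanded vertices of $\Gamma^\ast$.

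\emph{Step 2: contract and recover $\Gamma^\ast$ and $\Gamma$.} Contract each $v$-face to a single point; this collapses the alternating $r/b$ boundary cycles and leaves a graph $\Gamma^\ast$ whose edges are exactly the images of the $g$-edges, and in which the contracted point has degree equal to the number of sides of the $v$-face. This is the inverse of Step 3 of the construction. The key consistency check is that the face-2-coloring of $\Gamma^\ast$ is well defined: the faces of $\Gamma_2$ other than the $v$-faces must fall into exactly two classes that survive the contraction as the two face-colors of $\Gamma^\ast$. I would argue this from the fact that every non-$v$-face of $\Gamma_2$ is colored $r$ or $b$ and that the $g$-edges separate $r$-faces from $b$-faces, so contracting the $v$-faces yields a graph in which faces inherit a consistent $2$-coloring. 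Dualizing $\Gamma^\ast$ then produces the graph $\Gamma$, and the $2$-face-coloring of $\Gamma^\ast$ becomes a proper $2$-vertex-coloring of $\Gamma$, i.e.\ $\Gamma$ is bipartite. Running Construction~\ref{proc:tcc-new} forward on this $\Gamma$ returns $\Gamma_2$ by construction, completing the argument.

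\emph{Main obstacle.} The delicate point is verifying that the $v$-faces are \emph{unambiguously determined} and that their contraction yields a well-defined, properly $2$-face-colorable graph on a surface; equivalently, that the alternating $r/b$ boundary cycles are genuinely the boundaries of disc-like faces (using the $2$-cell embedding assumption) and that no two such $v$-faces share an edge, so the contraction is clean. A secondary subtlety is showing that the choice of distinguished color $g$ is legitimate: one must confirm that a $2$-colex always admits a description in which some color class plays the role of the spokes, which follows from the symmetry of the $3$-coloring but should be stated carefully. Once these structural facts are pinned down, the remaining steps are routine inversions of the construction.
```
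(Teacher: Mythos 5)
Your proof is correct and follows essentially the same route as the paper's: fix one color class, contract the faces of that color (equivalently, contract all edges of the other two colors) to recover a 2-face-colorable graph $\Gamma^\ast$, observe that its dual is bipartite, and note that this exactly inverts Construction~\ref{proc:tcc-new}. The paper's version is terser (phrased as a contradiction that is really a direct construction), while you additionally spell out why the alternating two-colored cycles are precisely the boundaries of the distinguished faces; the substance is the same.
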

\begin{proof}
Assume that there is a 2-colex that cannot be generated by Construction~\ref{proc:tcc-new}. Assuming that the faces and the edges are 3-colored using $\{r,g,b\}$, pick any color 
$c\in\{ r,g,b\}$. Then contract all the edges of the remaining colors, namely $\{r,g,b\}\setminus c$. This process shrinks the faces that are coloured $c$.
The $c$-colored faces become  the vertices of the resultant 2-face-colorable  graph.
 The dual of this graph is bipartite as only bipartite graphs are 2-colorable. But this is precisely the reverse of the process described above. Therefore,  the 2-colex must have risen from a bipartite graph. 
\end{proof}
Note that there need not be a unique bipartite graph that generates a color code. In fact,
three distinct bipartite graphs may generate the same color code, using the above construction.

We also note that the  2-colexes  obtained via construction~\ref{proc:tcc-bombin}  have the property that  for  one of the colours, all the faces are of size 4. 
The following result shows the relation between our result and Construction~\ref{proc:tcc-bombin}. The proof is straightforward and omitted.

\begin{corollary}\label{co:2valent}
The color codes arising from Construction~\ref{proc:tcc-bombin} can be obtained from Construction~\ref{proc:tcc-new} using bipartite graphs which have the property that one bipartition of vertices contains only vertices of degree two.
\end{corollary}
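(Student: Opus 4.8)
The plan is to combine the structural remark immediately preceding the corollary---that a $2$-colex produced by Construction~\ref{proc:tcc-bombin} has one colour, namely the colour of its $4$-sided $e$-faces, all of whose faces have size $4$---with the freedom of colour choice built into the proof of Theorem~\ref{th:tcc-new}. That theorem already guarantees that every $2$-colex, and in particular every $2$-colex coming from Construction~\ref{proc:tcc-bombin}, is the output of Construction~\ref{proc:tcc-new} for some bipartite graph. Hence all that remains is to exhibit, among the (up to three) bipartite graphs that can generate a given Construction~\ref{proc:tcc-bombin} colex, one whose bipartition has a part consisting entirely of $2$-valent vertices.

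First I would make explicit the size-to-degree dictionary that is implicit in the proof of Theorem~\ref{th:tcc-new}. Fix the $3$-colouring $\{x,y,z\}$ of $\Gamma_2$, let $y$ be the colour all of whose faces are $4$-sided, and run the contraction procedure of that proof with the contracted colour chosen to be $x$ (any colour other than $y$ serves equally well). Contracting the two matchings of colours $y$ and $z$ collapses each $x$-face to a vertex and leaves a $2$-face-colourable graph $G$ whose faces carry the colours $y$ and $z$; the sought bipartite graph is $\Gamma=G^\ast$, the two parts of whose bipartition are in bijection with the $y$-faces and the $z$-faces of $\Gamma_2$. The crucial point is a factor of two: the boundary of a $y$-face of $\Gamma_2$ alternates between edges of colour $x$ and edges of colour $z$, and under the contraction the $z$-edges vanish while the $x$-edges survive, so a $y$-face of size $2m$ becomes an $m$-gon of $G$, that is, a vertex of degree $m$ in the corresponding part of $\Gamma$.

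With this dictionary the conclusion is immediate. Since every $y$-face of $\Gamma_2$ has size $4$, every vertex of the part of $\Gamma$ indexed by the $y$-faces has degree $4/2=2$, while the remaining part (indexed by the $z$-faces) is unconstrained. Thus $\Gamma$ is a bipartite graph one of whose parts is entirely $2$-valent, and by Theorem~\ref{th:tcc-new} feeding $\Gamma$ into Construction~\ref{proc:tcc-new} reproduces $\Gamma_2$, which is exactly what the corollary asserts.

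I expect the only genuinely delicate step to be the verification of this dictionary, and within it the factor of two. One must check that contracting the two non-$x$ matchings really halves the boundary of each surviving face without merging distinct faces or creating spurious loops, so that face sizes in $\Gamma_2$ translate faithfully into vertex degrees of $\Gamma$. The heart of that check is the observation---a direct consequence of $3$-valence and the properness of the edge-$3$-colouring---that the boundary of every face of a fixed colour strictly alternates the other two colours; once this is established, the remainder of the argument is routine bookkeeping.
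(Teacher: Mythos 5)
Your proof is correct, and since the paper explicitly omits the proof as ``straightforward,'' your argument is precisely the intended one: it combines the remark immediately preceding the corollary (that the $e$-faces are all $4$-sided) with the contraction procedure from the proof of Theorem~\ref{th:tcc-new}, choosing to collapse a colour other than that of the $e$-faces so that the $4$-sided faces become bigons and hence degree-$2$ vertices of the bipartite graph. The factor-of-two dictionary between face sizes in $\Gamma_2$ and vertex degrees in the bipartite graph is exactly the right bookkeeping, and your identification of the alternating-colour boundary as the one point needing verification is apt.
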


\subsection{Subsystem codes via color codes}

Here we outline a procedure to obtain a subsystem code from a color code. This uses the construction of
\cite{suchara10}. 
We first construct a hypergraph that satisfies H1--4. We start with a 2-colex that
has an additional restriction, namely it has a nonempty set of faces each of which has a doubly even 
number of vertices. 

\begin{algorithm}[H]
 \floatname{algorithm}{Construction}
\caption{{\ensuremath{\mbox{ Topological subsystem code construction}}}}\label{proc:tsc-new}
\begin{algorithmic}[1]
\REQUIRE {A 2-colex $\Gamma_2$, assumed to have a 2-cell embedding.}
\ENSURE {A topological subsystem code specified by the hypergraph $\Gamma_h$.}
\STATE We assume that the faces of $\Gamma_2$ are colored $r$, $b$, and $g$.
Let $\rm{F}_r$ be the collection of $r$-colored faces of $\Gamma_2$, and $\rm{F}\subseteq \rm{F}_r$ such 
that $|f|\equiv 0 \bmod 4$ and $|f|>4$ for all $f\in \rm{F}$. 
\FOR{$f \in \rm{F}$}
\STATE Add a face $f'$ inside $f$ such that
$|f|=2|f'|$. 
\STATE Take a collection of alternating edges in the boundary of $f$. These are $|f|/2$ in number
and are  all colored  either  $b$ or $g$. 
\STATE Promote them to rank-3 edges by adding a vertex from $f'$  so 
that  the resulting hyperedges do not ``cross'' each other.	In other words, the rank-3 edge is a 
triangle and the triangles  are disjoint. Two possible methods of inserting the rank-3 edges are 
illustrated in  Fig.~\ref{fig:insertRank3}. In the first method, the hyperedges can be inserted so that they are in the 
boundary of the  $g$ colored faces, see Fig.~\ref{fig:promotedFace-1}. 	 Alternatively, the hyperedges can be inserted so that they are in 
the boundary of the $b$ colored faces, see Fig.~\ref{fig:promotedFace-2}.

\STATE Color the rank-3 edge with the same color as the parent rank-2 edge.
\STATE Color the edges of $f'$  using  colors distinct from the color of the rank-3 edges incident on 
$f'$.
\ENDFOR 
\STATE Denote the resulting hypergraph $\Gamma_h$ and use it  to construct the 
 subsystem code as in Construction~\ref{proc:tsc-suchara}.
\end{algorithmic}
\end{algorithm}

\begin{figure}[htb]
 \centering
 \subfigure[A face $f$ in $\rm{F}$]{
\includegraphics{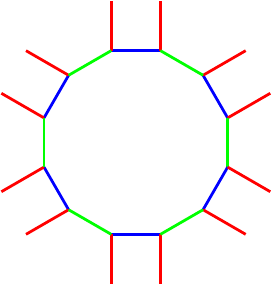}
    \label{fig:unpromotedFace}
}
   
    \subfigure[Inserting rank-3 edges in $f$ by promoting the $b$-edges to rank-3 edges.]{
\includegraphics{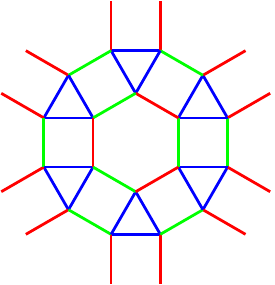}

   \label{fig:promotedFace-1}
   }
 \subfigure[Inserting rank-3 edges in $f$ by promoting the $g$-edges to rank-3 edges.]{
\includegraphics{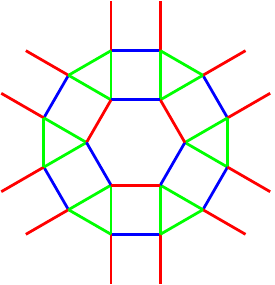}
   \label{fig:promotedFace-2}
   }
 \caption[Inserting rank-3 edges to convert $\Gamma_2$ to a hypergraph.]{%
(Color online) Inserting rank-3 edges in the faces of $\Gamma_2$ to obtain the hypergraph $\Gamma_h$. The rank-3 edges
correspond to triangles.}
 \label{fig:insertRank3}
\end{figure}

\begin{theorem}[Subsystem codes from color codes]\label{th:tsc-new}
Construction~\ref{proc:tsc-new} gives hypergraphs which satisfy the constraints H1-4 and therefore
give rise to 2-local  subsystem codes whose cycle group $\Sigma_{\Gamma_h} $ is defined as in Eq.~\eqref{eq:cycleGroup}
and gauge group is $\mc{G}=C(\mc{L}_{\Gamma_h})$.
\end{theorem}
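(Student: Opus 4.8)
The plan is to check each hypergraph condition H1--H4 for the object $\Gamma_h$ output by Construction~\ref{proc:tsc-new}, and then to verify that the coloring it prescribes is a genuine proper $3$-edge-coloring, since this is what guarantees the commutation relations~\eqref{eq:commuteRelns} and so lets me invoke Theorem~\ref{th:suchara-Const} to identify $\mc{G}=C(\mc{L}_{\Gamma_h})$. Throughout I would lean on two structural facts about the input $2$-colex: its edge-coloring is a proper $3$-edge-coloring (so exactly three colors meet at every trivalent vertex), and at each vertex precisely one face of each color appears, whence distinct $r$-faces are pairwise \emph{vertex}-disjoint. The latter decouples the modifications performed in different faces of $\rm{F}$, so it suffices to analyze a single promoted face $f$ and its inserted face $f'$.

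Conditions H1, H2, and H4 I expect to be routine. H1 is immediate: the edges of $\Gamma_h$ are the unaffected rank-$2$ edges of $\Gamma_2$, the rank-$2$ edges bounding each $f'$, and the promoted triangles, which have rank $3$. For H2, note that the $|f|/2$ alternating edges of $\partial f$ cover each boundary vertex exactly once, so promoting one of them merely turns one of the three edges at an original vertex into a rank-$3$ edge and leaves its degree at $3$; an inserted vertex $w\in f'$ carries its two $f'$-edges plus exactly one triangle, because the $|f|/2$ promoted edges match the $|f'|=|f|/2$ inserted vertices bijectively. For H4, the alternating edges of $\partial f$ are pairwise vertex-disjoint, their apexes in $f'$ are chosen distinct, and triangles living in different faces cannot meet because the host $r$-faces share no vertex.

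H3 and properness of the coloring are where the real content lies. For H3 one enumerates the possible edge pairs---two $f'$-edges, a triangle with an incident $f'$-edge or surviving $\Gamma_2$-edge, a triangle with a surviving boundary edge---and checks that each pair meets in at most one vertex, using that $\Gamma_2$ is simple and that $|f|>4$ rules out short coincidences. For the coloring, the key observation is that along the boundary of an $r$-face the edge-colors of $\Gamma_2$ must alternate between the two colors in $\{b,g\}$, so one entire alternating class is a single color; promoting that class and assigning the rank-$3$ edges the $Z$-type operator keeps the three colors distinct at every original vertex, since the two surviving edges there carry the other two Pauli types. At an inserted vertex the incident triangle is $Z$-type, forcing its two $f'$-edges to carry the remaining two Pauli types, one each, so consistency around the cycle $f'$ \emph{demands} that the $f'$-edges alternate between those two types.

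The main obstacle is exactly this last closure condition: the forced alternation around $f'$ is realizable precisely when the cycle $f'$ has even length, i.e.\ when $|f'|=|f|/2$ is even, which is guaranteed by the hypothesis $|f|\equiv 0 \bmod 4$. (The companion restriction $|f|>4$ is what leaves room to embed the disjoint inner cycle $f'$.) Thus the doubly-even condition is not incidental but is exactly what makes the prescribed coloring a proper $3$-edge-coloring, so that the link operators satisfy~\eqref{eq:commuteRelns}. With H1--H4 and properness established, Theorem~\ref{th:suchara-Const} applies directly and yields a subsystem code with cycle group $\Sigma_{\Gamma_h}$ as in~\eqref{eq:cycleGroup} and gauge group $\mc{G}=C(\mc{L}_{\Gamma_h})$; $2$-locality is inherited from Construction~\ref{proc:tsc-suchara}, since each rank-$3$ link operator is realized there by the three $2$-local operators $Z_uZ_v$, $Z_vZ_w$, and $Z_wZ_u$.
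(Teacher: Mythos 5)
Your proposal is correct and follows essentially the same route as the paper's proof: verify H1--H4 directly from the construction, observe that $|f|\equiv 0\bmod 4$ (together with $|f|>4$) is exactly what allows $f'$ to be given a consistent edge coloring making $\Gamma_h$ 3-edge-colorable, and then invoke Theorem~\ref{th:suchara-Const}. The paper's own argument is terser---it asserts the colorability and the role of $|f|>4$ without spelling out the forced alternation around $f'$ that you make explicit---but the content and structure are the same.
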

\begin{proof}
Requirement H1 is satisfied because by construction, only rank-3 hyper edges are added to $\Gamma_2$,
which only contains rank-2 edges. 
The hypergraph has two types of vertices those that come from $\Gamma_2$ and those that are
added due to introduction of the hyperedges. Since all hyperedges come by promoting an edge to a 
hyperedge, it follows that the hypergraph is trivalent on the original vertices inherited from 
$\Gamma_2$. By construction, the vertices in $V(\Gamma_h)\setminus V(\Gamma_2)$ are trivalent and
thus $\Gamma_h$ satisfies H2. Note that $|f|\equiv0 \bmod 4$ and $|f|>4$, therefore $f'$ can be assigned an  edge coloring that ensures that $\Gamma_h$ is 3-edge colorable. Since $|f|>4$ we also ensure that no two edges intersect in more than one site, and H3 holds. By construction, all rank-3 edges are disjoint. This satisfies requirement H4.  \end{proof}

Let us illustrate this construction using a small example. 
It is based on the 2-colex shown in Fig.~\ref{fig:4-6-12lat}. The hypergraph derived
from this 2-colex is shown in Fig.~\ref{fig:4-6-12-hg}. 
Its rate is nonzero.

\begin{center}
\begin{figure}[htb]
\includegraphics[scale=0.5,angle=90]{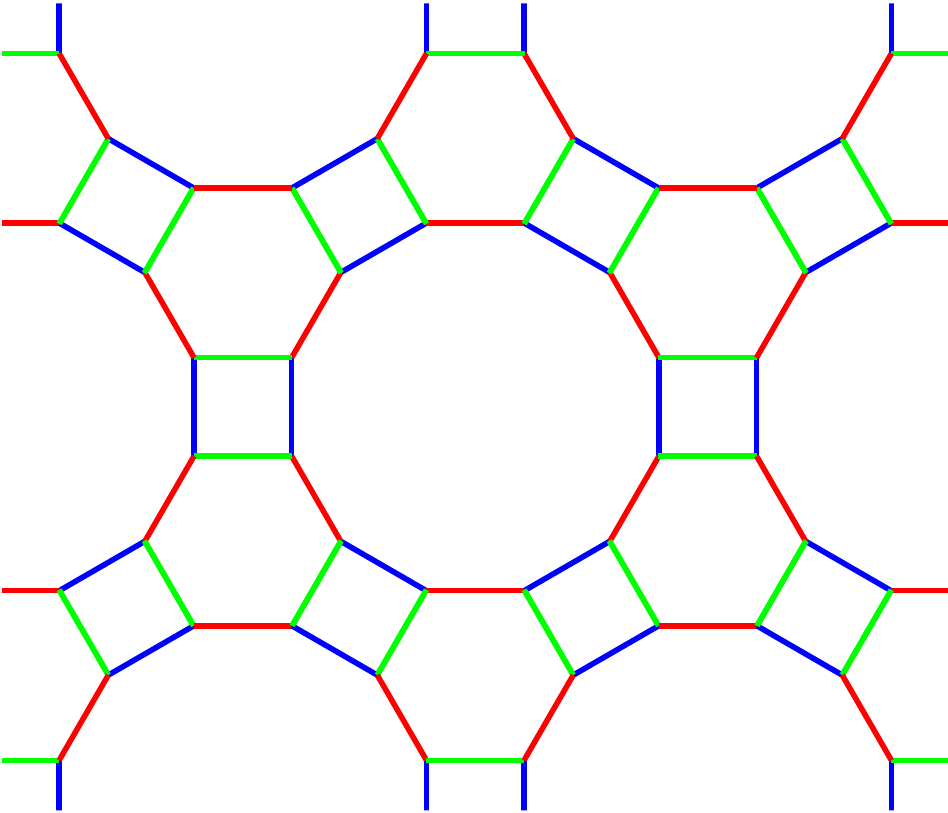}
\caption{(Color online) Color code on a torus from a 4-6-12 lattice. Opposite sides are identified.}\label{fig:4-6-12lat}
\end{figure}
\end{center}

\begin{center}
\begin{figure}[htb]
\includegraphics[scale=0.5,angle=90]{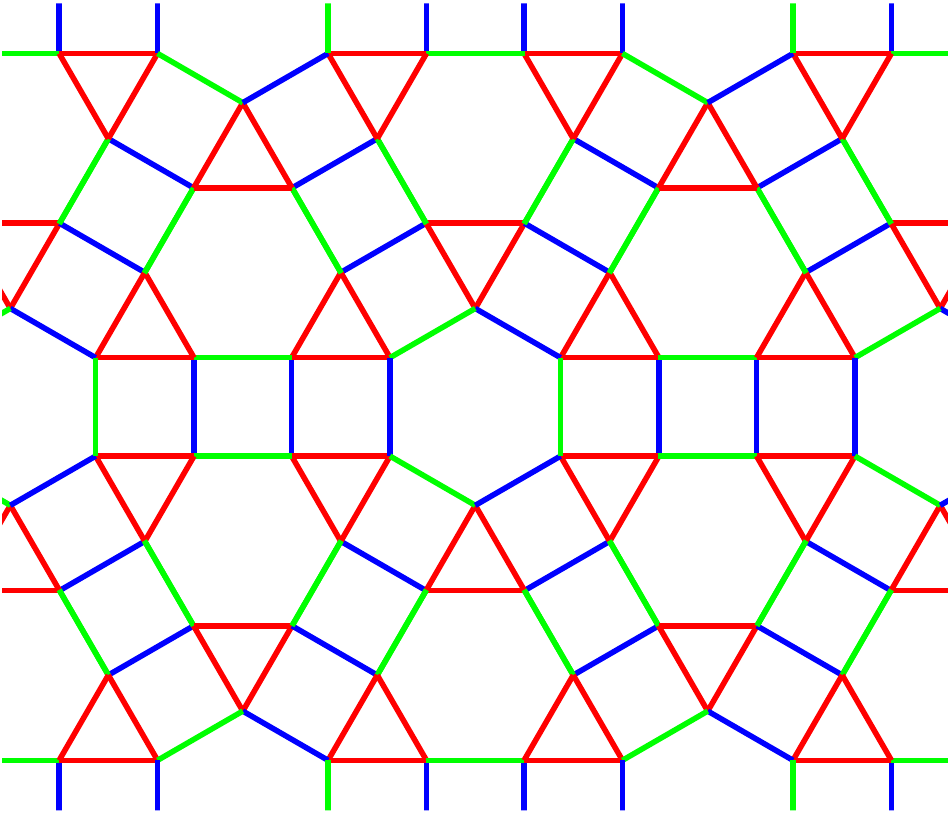}
\caption{(Color online) Illustrating  Construction~\ref{proc:tsc-new}.}\label{fig:4-6-12-hg}
\end{figure}
\end{center}

At this point, Theorem~\ref{th:tsc-new} is still 	quite general and we do not have expressions for
the code parameters in closed form. Neither is the structure of the stabilizer and the logical operators very apparent.  We impose some constraints on the set $\rm{F}$ so that we 
can remedy this situation. These restrictions still lead to a large class of subsystem codes. 
\begin{compactenum}[(i)]
\item $\rm{F}=\rm{F}_c$ is the set of all the faces 
of a given color; see Theorem~\ref{th:tsc-1}. 
\item $\rm{F}$  is an alternating set and $\rm{F}_c$ and $\rm{F}\setminus \rm{F}_c$ form a bipartite graph (in a sense which will be made precise shortly); see Theorem~\ref{th:tsc-2}. 
\end{compactenum}

Before, we can evaluate the parameters of these codes, we need some additional results
with respect to the structure of the stabilizer and the centralizer of the gauge group. 
The stabilizers vary depending on the set $\rm{F}$, nevertheless we can make some general statements
about a subset of these stabilizers. 

\begin{figure}[ht]
 \centering
 \subfigure[A hypercycle  $\sigma_1$ in $f$ (shown in bold edges) consisting of only rank-2 edges.]{
 \includegraphics{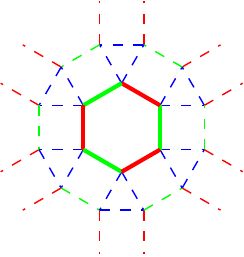}
   \label{fig:rank2Cycle}
   }
 \subfigure[A hypercycle $\sigma_2$  in $f$ (shown in bold edges) with both rank-2  and rank-3 edges.]{
 \includegraphics{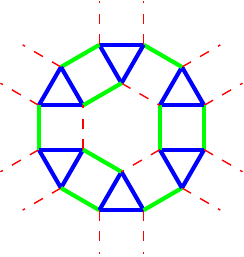}
   \label{fig:hyperCycle-1}
   }
 \subfigure[A dependent hypercycle 
 $\sigma_3$ which is a combination of $\sigma_1$ and $\sigma_2$ over $\F_2$.]{
 \includegraphics{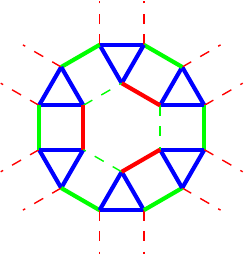}
   \label{fig:hyperCycle-2}
   }
\caption{(Color online) Stabilizer generators from a face in $\rm{F}$ for the subsystem codes of Construction~\ref{proc:tcc-new}; one of them is dependent. We shall view $\sigma_1$
and $\sigma_2$ as the two independent hypercycles associated with $f$.}\label{fig:stabGen-v-face-1}
\end{figure}

\begin{lemma}\label{lm:stabGens-tsc-1}
Suppose that $f$ is a face in $\rm{F}$ in Construction~\ref{proc:tcc-new}.
Then there are two independent hypercycles that we can associate with this face and consequently
two independent stabilizer generators as shown in Fig.~\ref{fig:stabGen-v-face-1}
\end{lemma}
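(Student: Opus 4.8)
The plan is to read the two hypercycles off the local geometry of a promoted face and then certify that each yields a central element of $\mc{L}_{\Gamma_h}$, i.e.\ a stabilizer. First I would fix notation for a face $f$ promoted by Construction~\ref{proc:tsc-new}: write its boundary as the $|f|$-cycle $v_1,\dots,v_{|f|}$ with the promoted edges (say the $b$-coloured ones) on the odd-indexed slots and the surviving rank-2 edges (the $g$-edges) on the even-indexed slots, and write $w_1,\dots,w_{|f|/2}$ for the vertices of the inner face $f'$. By the construction each promoted edge becomes a disjoint triangle $e_i=(v_{2i-1},v_{2i},w_i)$, so in $\Gamma_h$ every inner vertex $w_i$ is incident on exactly one rank-3 edge together with the two edges of $\partial f'$ meeting it. This incidence pattern, guaranteed by H4 and the construction, is the only structural fact I will really use.

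Next I would exhibit the cycles. Take $\sigma_1=\partial f'$, which is a hypercycle built from rank-2 edges only, so $W(\sigma_1)\in S$ is immediate from Corollary~\ref{co:rank2Cycle}. For $\sigma_2$ take the union of all triangles $e_i$, all surviving $g$-edges, and a perfect matching $M$ of the cycle $\partial f'$. Such an $M$ exists precisely because $|f|\equiv 0\bmod 4$ makes $|f'|=|f|/2$ even, and this is exactly where the doubly-even hypothesis is consumed. A short degree count then shows $\sigma_2$ is a hypercycle: every outer vertex carries one triangle edge and one $g$-edge, and every $w_i$ carries its triangle edge together with exactly one matching edge, so all vertex degrees are even.

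The main work is to show $W(\sigma_2)\in S$. Writing $S=\mc{L}_{\Gamma_h}\cap C(\mc{L}_{\Gamma_h})$ (which follows from $\mc{G}=C(\mc{L}_{\Gamma_h})$ of Theorem~\ref{th:suchara-Const} together with the bicommutant identity $C(\mc{G})=\mc{L}_{\Gamma_h}$), and noting that $W(\sigma_2)\in\mc{L}_{\Gamma_h}$, it suffices to show $W(\sigma_2)$ commutes with $W(\sigma')$ for every hypercycle $\sigma'$. Expanding both cycle operators and applying Eq.~\eqref{eq:commuteRelns}, the commutator sign is $(-1)^N$ with $N=\sum_{e\in\sigma_2,\,e'\in\sigma',\,e\ne e'}|e\cap e'|$; since $\sigma_2$ is a hypercycle every degree $\deg_{\sigma_2}(v)$ is even, so the generic contribution $\sum_v\deg_{\sigma_2}(v)\deg_{\sigma'}(v)$ vanishes mod $2$ and $N$ collapses to $\sum_{e\in\sigma_2\cap\sigma'}\rk(e)$, i.e.\ to the parity of the number of rank-3 edges shared by $\sigma_2$ and $\sigma'$. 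I would then prove the decisive combinatorial fact: every hypercycle meets the triangles $\{e_i\}$ in an even number of edges. This follows from the incidence pattern above---writing $a_i$ for the indicator that $e_i\in\sigma'$ and $b_j$ for the indicators of the edges of $\partial f'$, the even-degree condition at $w_i$ reads $a_i\equiv b_{i-1}+b_i\bmod 2$, and summing over $i$ gives $\sum_i a_i\equiv 2\sum_j b_j\equiv 0\bmod 2$. Hence $N$ is even for all $\sigma'$, so $W(\sigma_2)\in Z(\mc{L}_{\Gamma_h})=S$. I expect this parity step to be the crux; intuitively it records that $\sigma_2$ is contractible inside $f$ and so carries trivial homology.

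Finally I would check independence. Since $\sigma_1$ contains no rank-3 edge while $\sigma_2$ contains all of them, they are $\F_2$-independent cycles, and the operators $W(\sigma_1),W(\sigma_2)$ are nonidentity with different supports ($W(\sigma_1)$ acts only on $f'$, whereas $W(\sigma_2)$ also acts on the outer boundary of $f$), so they are independent stabilizer generators. The third cycle of Fig.~\ref{fig:stabGen-v-face-1}, namely $\sigma_3=\sigma_1+\sigma_2$, is the same type of cycle as $\sigma_2$ but carries the complementary matching $\partial f'\setminus M$, which makes transparent why it is dependent and why exactly two independent generators arise from $f$.
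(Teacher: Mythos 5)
Your proof is correct, but the decisive step is argued by a genuinely different route than the paper's. Both proofs pick the same two cycles: $\sigma_1=\partial f'$, dispatched by Corollary~\ref{co:rank2Cycle} in each case, and $\sigma_2$ consisting of all rank-3 edges of $f$, the surviving rank-2 boundary edges, and an alternating set of edges of $\partial f'$. The divergence is in certifying $W(\sigma_2)\in S$. The paper works on the $\mc{G}$ side: it exhibits an explicit factorization of $W(\sigma_2)$ into link operators of the derived graph $\overline{\Gamma}_h$ (Fig.~\ref{fig:hyperCycleDecompos}), showing membership in the gauge group directly; that decomposition is not just a proof device, since it is reused in Section~\ref{sec:decoding} to build the three-round syndrome-measurement schedule. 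You instead work on the $C(\mc{L}_{\Gamma_h})$ side: invoking the double-centralizer identity for Pauli subgroups so that, modulo phases, $S$ is the centre of $\mc{L}_{\Gamma_h}$, you reduce the claim to commutation of $W(\sigma_2)$ with every cycle operator, and settle that by the local parity identity $a_i\equiv b_{i-1}+b_i$ at the inner vertices, which shows that \emph{every} hypercycle meets the rank-3 edges of $f$ an even number of times. This buys a figure-free argument that makes explicit exactly where $|f|\equiv 0\bmod 4$ and the even-degree condition are consumed, and the parity fact is reusable elsewhere; what it does not deliver is the explicit gauge decomposition the paper needs later for measurement. Two minor points: Eq.~\eqref{eq:commuteRelns} holds only for $e\neq e'$, which you correctly handle by excluding the diagonal from $N$; and the bicommutant step, while standard for subgroups of the Pauli group, is nowhere stated in the paper and deserves a sentence of justification. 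Your closing remark that $\sigma_3$ carries the complementary matching of $\partial f'$ is a nice way to see the dependency that the paper only asserts.
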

\begin{proof}
We use the same notation as in Construction~\ref{proc:tcc-new}. Then Construction~\ref{proc:tcc-new} adds a new face $f'$ to $\Gamma_2$ in the interior of $f$. Let $\sigma_1$ be the 
cycle formed by the rank-2 edges in the boundary of $f'$, see Fig.~\ref{fig:rank2Cycle}.
 By Corollary~\ref{co:rank2Cycle},  $W(\sigma_1) \in S $.

Now let $\sigma_2$, see Fig.~\ref{fig:hyperCycle-1}, be the hypercycle consisting of all the edges in the boundary of $f$ and
an alternating set of rank-2 edges in the boundary of $f'$.  In other words, $\sigma_2$ consists of all the rank-3 edges inserted in  $f$ as well as the
rank-2 edges in its boundary and an alternating pair of rank-2 edges in $f'$. Because $|f|\equiv 0 \bmod 4$,
the boundary of $f'$ is 2-edge colorable.  
To prove  that $W(\sigma_2)$ can be generated by the elements of
$\mc{G}$, observe that $W(\sigma_2)$ can be split as 
\be
W(\sigma_2)  & = & \prod_{e\in \partial(f)} K_e \prod_{e\in \partial(f')\cap E_g} K_e,
\ee
where $E_g$ refers to the  $r$-colored edges  in $\Gamma_h$ and the boundary is with respect to 
$\Gamma_h$. We can also rewrite this in terms of the link operators 
in $\overline{\Gamma}_h$.
\be
W(\sigma_2) &= & \prod_{e\in \partial(f) } \overline{K}_e \prod_{e\in \partial(f') \cap \overline{E}_r}\overline{K}_e
\ee
where the boundary is with respect to 
$\overline{\Gamma}_h$ and $\overline{E}_r$ now refers to the $r$-colored edges in $\overline{\Gamma}_{h}$.

This is illustrated in Fig.~\ref{fig:hyperCycleDecompos}. 
The third cycle $\sigma_3$, see Fig.~\ref{fig:hyperCycle-2}, can be easily seen to be a combination of the cycles $\sigma_1$ and $\sigma_2$ over $\F_2$.
\end{proof}
\begin{figure}[ht]
 \centering
 \subfigure[Decomposing the hypercycle $\sigma_1$.]{
 \includegraphics{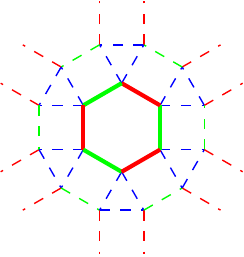}
   \label{fig:rank2CycleDecompos}
   }
 \subfigure[Decomposing the hypercycle $\sigma_2$.]{

\includegraphics{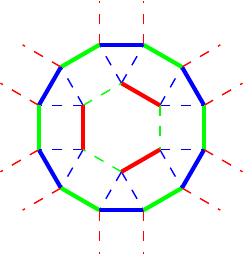}
   \label{fig:hyperCycleDecompos}
   }
\caption{(Color online) Decomposing  $\sigma_i$ in Fig.~\ref{fig:stabGen-v-face-1} so that $W(\sigma_i)$ 
 can be generated using the elements of $\mc{G}$. In each of the above $W(\sigma_i)$
 can be generated as the product of link operators corresponding to the bold edges.
 Note that these decompositions are with respect to the link operators of the derived graph $\overline{\Gamma}_h$ while
 the cycles are defined with respect to the hypergraph $\Gamma_h$.
 }\label{fig:stabGen-v-face-decompose}
\end{figure}

\begin{figure}[ht]
\centering
\includegraphics{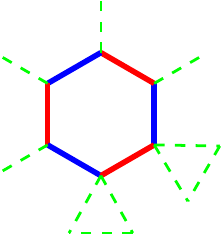}
\caption{(Color online) A  cycle $\sigma_1$ of rank-2 edges in the boundary of $f$, shown in bold, when $f$  has no rank-3 edges in its boundary. Some of the edges incident on $f$ maybe rank-3 but none in the boundary are.}
\label{fig:stabGen-f-face-1}
\end{figure}
\begin{figure}[ht]

\subfigure[A cycle $\sigma_1$ of rank-2 edges in the boundary of $f$, shown in bold,  
note that a rank-3 edge is incident on every vertex of $f$ unlike Fig.~\ref{fig:stabGen-f-face-1}.]{

\includegraphics{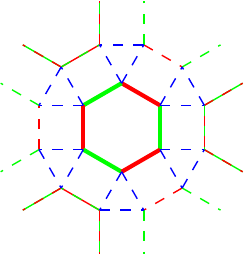}
	\label{fig:cycle-f-face-1}
}
\subfigure[A cycle $\sigma_2$ of rank-2  and rank-3, shown in bold; $\sigma_2$ differs from the cycle in Fig.~\ref{fig:hyperCycle-1}, in that the ``outer'' rank-2 edges maybe either $r$ or $g$.]{

\includegraphics{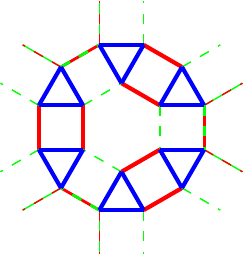}
	\label{fig:cycle-f-face-2}
}
\subfigure[Decomposing the hypercycle $\sigma_2$ so that $W(\sigma_2)$ 
 can be generated using the elements of $\mc{G}$. Note the decomposition refers to $\overline{\Gamma}_h$.]{

\includegraphics{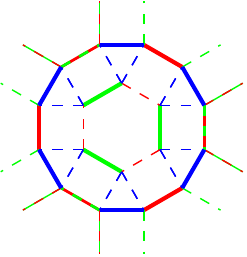}

	\label{fig:cycle-f-face-2-D}
}

\caption{ (Color online) Stabilizer generators for a face which has no rank-3 edges in its boundary when  $\rm{F}=\rm{F}_c $ and $f\not\in \rm{F}$.}
\label{fig:stabGen-f-face-2}
\end{figure}

\begin{lemma}\label{lm:stabGens-tsc-1-f}
Suppose that $f$ has no rank-3 edges in its boundary $\partial(f)$ as in Fig.~\ref{fig:stabGen-f-face-1}. Then  $W(\partial(f))$ is in $S$.
Further, if $\rm{F}=\rm{F}_r$ and $f\not\in \rm{F}$, then we can associate another hypercycle $\sigma_2$ to $f$, 
as in Fig.~\ref{fig:stabGen-f-face-2}, such that $W(\sigma_2)$ is in $S$. 
\end{lemma}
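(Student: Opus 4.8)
The plan is to follow the same two-step template as the proof of Lemma~\ref{lm:stabGens-tsc-1}, resting on one simplifying observation. For any hypercycle $\sigma$ the operator $W(\sigma)$ lies in $\mc{L}_{\Gamma_h}$ by Eq.~\eqref{eq:cycleGroup}, and since $\mc{G}=C(\mc{L}_{\Gamma_h})$ by Theorem~\ref{th:suchara-Const}, every such $W(\sigma)$ automatically commutes with $\mc{G}$. Because $S=\mc{G}\cap C(\mc{G})$, proving $W(\sigma)\in S$ then reduces to the single membership $W(\sigma)\in\mc{G}$, i.e.\ to writing $W(\sigma)$ as a product of the derived link operators $\overline{K}_e$. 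The first assertion is immediate from this viewpoint: when $\partial(f)$ carries no rank-3 edge it is a closed walk built solely from rank-2 edges, hence a cycle of rank-2 edges in $\Gamma_h$ (Fig.~\ref{fig:stabGen-f-face-1}), so Corollary~\ref{co:rank2Cycle} gives $W(\partial(f))\in S$ at once.

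For the second assertion I would first fix the local picture forced by $\rm{F}=\rm{F}_r$ and $f\notin\rm{F}$. As $f$ is a $b$- or $g$-face, its boundary edges alternate in colour, and the edges it shares with the surrounding (now promoted) $r$-faces all have the same colour; the hypothesis that $\partial(f)$ contains no rank-3 edge then forces each neighbouring $r$-face to have promoted the edges it does \emph{not} share with $f$. Consequently the outward edge at every vertex of $\partial(f)$ has been promoted, so a rank-3 triangle is incident on each vertex of $f$ and reaches into the inner face of the adjacent $r$-face; this is exactly the configuration of Fig.~\ref{fig:cycle-f-face-1}. I would then define $\sigma_2$ to be the hypercycle of Fig.~\ref{fig:cycle-f-face-2}: a closed loop encircling $f$ that threads these incident triangles together with suitable rank-2 edges of $\partial(f)$ and of the neighbouring inner faces. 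The first point to verify is that $\sigma_2$ is a genuine hypercycle, i.e.\ that an even number of its edges meets each vertex of its support---a purely local count at the vertices of $\partial(f)$, at the inner-face (third) vertices of the triangles, and at their remaining vertices---and this count is precisely what dictates which rank-2 edges must be thrown in.

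With $\sigma_2$ in hand the remaining task is to show $W(\sigma_2)\in\mc{G}$, which I would establish as in Lemma~\ref{lm:stabGens-tsc-1}: write $W(\sigma_2)=\prod_{e\in\sigma_2}K_e$ using the hypergraph link operators and then re-express it as a product of the derived operators $\overline{K}_e$ over the bold edges of Fig.~\ref{fig:cycle-f-face-2-D}. The delicate point, and the main obstacle, is the rank-3 edges. Each triangle $(u,v,w)$ contributes $K_e=Z_uZ_vZ_w$ by Eq.~\eqref{eq:rank3LinkOp}, and a single triangle's $Z_w$ on an inner-face vertex is \emph{not} by itself expressible through the three $\overline{K}$'s of that triangle, whose products only yield $I,Z_uZ_v,Z_vZ_w,Z_wZ_u$. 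The decomposition therefore works only if these stray inner-face $Z$'s pair up across the triangles and the chosen rank-2 edges of $\sigma_2$, and verifying this bookkeeping is where the hypothesis $\rm{F}=\rm{F}_r$ is used, since it rings $f$ with promoted $r$-faces supplying a matched complement of triangles and inner-face edges. An equivalent and perhaps cleaner route I would keep in reserve is to realise $W(\sigma_2)$ as a product of operators already known to lie in $S$---namely $W(\partial(f))$, the operators attached to the surrounding $r$-faces by Lemma~\ref{lm:stabGens-tsc-1}, and rank-2 cycles from Corollary~\ref{co:rank2Cycle}---so that membership in $S$ follows from $S$ being a group, bypassing the explicit link-operator bookkeeping altogether.
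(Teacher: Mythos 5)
Your proposal is correct and follows essentially the same route as the paper: the first claim is dispatched by Corollary~\ref{co:rank2Cycle} exactly as in the paper, and for the second you build the same hypercycle $\sigma_2$ from the rank-3 triangles incident on the vertices of $f$ together with the connecting rank-2 edges, and certify $W(\sigma_2)\in\mc{G}$ via the decomposition of Fig.~\ref{fig:cycle-f-face-2-D} by the argument of Lemma~\ref{lm:stabGens-tsc-1}. The extra bookkeeping you supply (that $W(\sigma)\in C(\mc{G})$ is automatic, the local forcing of where the promoted edges sit, and the hypercycle parity check) is detail the paper delegates to its figures rather than a genuinely different method.
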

\begin{proof}
If $f$ has no rank-3 edges in its boundary, then $W(\partial(f))$ is in $S$ by Corollary~\ref{co:rank2Cycle}. It is possible that  some rank-3 edges are incident on $f$ even though they
are not in its boundary. This is illustrated in Fig.~\ref{fig:stabGen-f-face-1}.

If $\rm{F}=\rm{F}_r$, and $f\not\in \rm{F}$, then a rank-3 edge is incident on
every vertex of $f$ and we can form another cycle by considering all the rank-3 edges, and rank-2 edges
connecting all pairs of rank-3 edges, see Fig.~\ref{fig:cycle-f-face-2}.  This includes an alternating set of edges in the boundary of $f$. 
This is different  from the hypercycle in Fig.~\ref{fig:hyperCycle-1} in that the ``outer'' rank-2 
edges  connecting the rank-3 edges maybe of different color. 
Nonetheless by an augment similar to that in the proof of Lemma~\ref{lm:stabGens-tsc-1}, 
and using the decomposition shown in Fig.~\ref{fig:cycle-f-face-2-D} we can show that
 $W(\sigma_2)$ is in $S$. 
\end{proof}

\begin{remark}(Canonical cycles.) For the faces in which have two stabilizer generators associated
with them we  make the following canonical choice for the stabilizer generators. 
The first basis cycle $\sigma_1$ always refers to the cycle consisting of the rank-2 edges forming the 
boundary of a face. The second basis cycle for $f$ is chosen to be the cycle in which the rank-3 edges 
are paired with an adjacent rank-3 edge such that both the rank-2 edges pairing them are of the same 
color. 
\end{remark}

The decomposition as   illustrated in Fig.~\ref{fig:cycle-f-face-2-D} works even when the stabilizer 
is for a face which is adjacent to itself. 

Next we prove a bound on the distance of the codes obtained via Construction~\ref{proc:tsc-new}.
This is defined by the cycles in space $\Sigma_{\Gamma_h}\setminus \Delta_{\Gamma_h}$. Recall that
$W(\sigma)\in S$, if $\sigma\in \Delta_{\Gamma_h}$.

\begin{lemma}(Bound on distance)\label{lm:tsc-distance}
The distance of the subsystem code obtained from Construction~\ref{proc:tsc-new} is upper bounded by 
the number of  rank-3 edges in the hypercycle  with minimum number of rank-3  edges
in $\Sigma_{\Gamma_h}\setminus \Delta_{\Gamma_h}$. 
\end{lemma}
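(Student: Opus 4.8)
The plan is to produce, for an upper bound, a single low-weight representative of a nontrivial logical operator and to show that its weight is controlled by the rank-3 edges alone. Recall from Eq.~\eqref{eq:distance} that $d = \min\{\wt(e)\mid e\in C(S)\setminus\mc{G}\}$, and that by Theorem~\ref{th:suchara-Const} we have $\mc{G}=C(\mc{L}_{\Gamma_h})$, whence $C(\mc{G})=\mc{L}_{\Gamma_h}$ (up to phases) and $\mc{L}_{\Gamma_h}\cap\mc{G}=S$ corresponds to the cycle subgroup $\Delta_{\Gamma_h}$. First I would observe that every cycle operator $W(\sigma)$ lies in $C(\mc{G})\subseteq C(S)$, and that for $\sigma\in\Sigma_{\Gamma_h}\setminus\Delta_{\Gamma_h}$ we have $W(\sigma)\notin\mc{G}$: otherwise $W(\sigma)\in\mc{L}_{\Gamma_h}\cap\mc{G}=S$, forcing $\sigma\in\Delta_{\Gamma_h}$. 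Hence each such $W(\sigma)$ is a genuine dressed logical operator, i.e., an element of $C(S)\setminus\mc{G}$.

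The key step is to reduce $W(\sigma)$ modulo the gauge group so that only the rank-3 edges contribute to its weight. Splitting $W(\sigma)=\prod_{e\in\sigma,\ \rk(e)=2}K_e\cdot\prod_{e\in\sigma,\ \rk(e)=3}K_e$, the first product consists of rank-2 link operators $K_e=\overline{K}_e$, all of which are generators of $\mc{G}$; thus modulo $\mc{G}$ the rank-2 edges drop out entirely. For each rank-3 edge $e=(u,v,w)$, the operator $K_e=Z_uZ_vZ_w$ can be multiplied by the derived gauge generator $\overline{K}_{u,v}=Z_uZ_v\in\mc{G}$ (introduced in Step~4 of Construction~\ref{proc:tsc-suchara}) to leave $Z_w$, a single-qubit operator supported on one vertex of the triangle. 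Carrying this out for every rank-3 edge yields a gauge-equivalent operator $W(\sigma)\,g$, with $g\in\mc{G}$, that is a product of at most one $Z$ per rank-3 edge of $\sigma$.

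Since $S=Z(\mc{G})$ gives $\mc{G}\subseteq C(S)$, and since $W(\sigma)\notin\mc{G}$, the reduced operator $W(\sigma)\,g$ still lies in $C(S)\setminus\mc{G}$ and is therefore an admissible logical operator whose weight is at most the number of rank-3 edges in $\sigma$. Taking $\sigma$ to be the hypercycle in $\Sigma_{\Gamma_h}\setminus\Delta_{\Gamma_h}$ with the fewest rank-3 edges then yields the claimed bound $d\le(\text{number of rank-3 edges in }\sigma)$.

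I expect the main obstacle to be bookkeeping around the gauge group rather than any conceptual difficulty. One must confirm that the individual $ZZ$ operators arising from a rank-3 triangle are genuine generators of $\mc{G}$ (they are, being exactly the derived link operators of Construction~\ref{proc:tsc-suchara}), and that multiplying by them can only lower, never raise, the weight contributed by the rank-3 edges; in particular, leftover single $Z$'s may further cancel where two rank-3 edges share a vertex, which only improves the bound. One should also verify that the reduction never lands $W(\sigma)\,g$ back inside $\mc{G}$, which holds because $g\in\mc{G}$ while $W(\sigma)\notin\mc{G}$.
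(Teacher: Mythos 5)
Your proposal is correct and follows essentially the same route as the paper's proof: represent the logical class by $W(\sigma)$ for $\sigma\in\Sigma_{\Gamma_h}\setminus\Delta_{\Gamma_h}$, cancel the rank-2 link operators against gauge generators, and reduce each rank-3 edge's $Z_uZ_vZ_w$ to a single-qubit $Z$ using the derived two-body link operators, leaving weight equal to the number of rank-3 edges. You merely spell out explicitly which gauge elements accomplish the reduction, which the paper leaves implicit.
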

\begin{proof}
Every undetectable error of the subsystem code can be written as $gW(\sigma)$ for some $g\in \mc{G}$
and $\sigma\in \Sigma_{\Gamma_h}\setminus \Delta_{\Gamma_h}$. It suffices therefore, to check by how much the weight of 
$W(\sigma)$ can be reduced by acting with elements of $\mc{G}$.
In particular,  we can reduce $W(\sigma)$ such that only the 
rank-3 edges remain, and obtain an equivalent operator of lower weight. 
We can further act on this so that corresponding to every rank-3 edge in $\sigma$
the modified error has support only on one  of its vertices. 
This reduced error operator has weight equal to the number of rank-3 edges in $\sigma$. 
Thus the distance of the code is upper bounded by the number of rank-3 edges in the  hypercycle with minimum number of hyperedges in $\Sigma_{\Gamma_h}\setminus \Delta_{\Gamma_h}$. 
\end{proof}

It appears that this bound is tight, in that the distance is actually no less than the one specified
above. 

\begin{theorem}\label{th:tsc-1}
Suppose that $\Gamma$ is a graph such that every vertex has even degree greater than 2. Then construct the
2-colex $\Gamma_2$ from $\Gamma$ using Construction~\ref{proc:tcc-bombin}. Then apply Consruction~
\ref{proc:tsc-new} with $\rm{F}$ being  the set of $v$-faces of 
$\Gamma_2$  and with the rank-3 edges being in the boundaries of the $e$-faces of $\Gamma_2$. 
Let $\ell$ be the number of rank-3 edges in a cycle in $\Sigma_{\Gamma_h}\setminus \Delta_{\Gamma_h}$.
Then we obtain a 
\ben
[[6e,1+\delta_{\Gamma^\ast,\text{bipartite}}-\chi, 4e-\chi, d \leq \ell]]\label{eq:tscParams-1}
\een subsystem code
where $e=|E(\Gamma)| $  and $\delta_{\Gamma^\ast,\text{bipartite}}=1$  if $\Gamma^\ast$ is bipartite and zero otherwise.
\end{theorem}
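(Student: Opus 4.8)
The plan is to pin down $n$, $s=\dim S$, and $\dim C(\mc{G})$, and then recover $k$ and $r$ from the subsystem identities $n=k+r+s$, $\dim\mc{G}=2r+s$, and $\dim C(\mc{G})=2k+s$. First I would count qubits. Construction~\ref{proc:tcc-bombin} turns $\Gamma$ into a trivalent $2$-colex $\Gamma_2$ whose faces biject with the vertices, edges, and faces of $\Gamma$; trivalence gives $|E(\Gamma_2)|=\tfrac32|V(\Gamma_2)|$, and substituting $|F(\Gamma_2)|=|V(\Gamma)|+|E(\Gamma)|+|F(\Gamma)|$ into $\chi=|V|-|E|+|F|$ (the same $\chi$ for $\Gamma$ and $\Gamma_2$) yields $|V(\Gamma_2)|=4e$. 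The hypothesis that every degree is even and exceeds $2$ is exactly what makes each $v$-face admissible for $\rm{F}$: a degree-$\deg(v)$ vertex produces a $v$-face with $2\deg(v)$ sides, so $|f|\equiv0\bmod4$ and $|f|>4$. Construction~\ref{proc:tsc-new} then inserts an inner face with $\deg(v)$ vertices in each $v$-face, adding $\sum_v\deg(v)=2e$ vertices, so $n=|V(\Gamma_h)|=6e$.

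Next I would reduce the encoded and gauge counts to two cycle-space dimensions. Since $\mc{G}=C(\mc{L}_{\Gamma_h})$ (Theorem~\ref{th:suchara-Const}), the double-centralizer property gives $C(\mc{G})=\mc{L}_{\Gamma_h}$ modulo phases, hence $\dim C(\mc{G})=\dim\mc{L}_{\Gamma_h}=\dim\Sigma_{\Gamma_h}$, while $s=\dim\Delta_{\Gamma_h}$. Then $k=\tfrac12(\dim\Sigma_{\Gamma_h}-s)$ and $r=n-k-s$, so everything rests on computing $\dim\Sigma_{\Gamma_h}$ and $\dim\Delta_{\Gamma_h}$. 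For the former I would realize the hypercycle space as the kernel of the $\F_2$ incidence map $\partial\colon\F_2^{E(\Gamma_h)}\to\F_2^{V(\Gamma_h)}$, so that $\dim\Sigma_{\Gamma_h}=|E(\Gamma_h)|-|V(\Gamma_h)|+\dim\operatorname{coker}\partial$. Counting the unpromoted edges of $\Gamma_2$, the inner-face edges, and the rank-$3$ edges gives $|E(\Gamma_h)|=8e$, so $\dim\Sigma_{\Gamma_h}=2e+\dim\operatorname{coker}\partial$.

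The cokernel is the crux of the first count. Its dual is the space of $0/1$ vertex labelings meeting every edge evenly; since exactly one of the three edges at each original vertex is promoted to rank $3$ while each inner vertex keeps only its two inner-cycle edges, the rank-$2$ subgraph of $\Gamma_h$ is $2$-regular and hence a disjoint union of cycles---the $|F(\Gamma)|$ boundaries of the $f$-faces and the $|V(\Gamma)|$ inner cycles, $e+\chi$ components in all. I would contract these cycles and solve the parity conditions that the $2e$ rank-$3$ edges impose on the resulting $e+\chi$ component-labels; the solution space has dimension $1+\delta_{\Gamma^\ast,\text{bipartite}}$, the extra generator being present precisely when the faces of $\Gamma$ (the vertices of $\Gamma^\ast$) are $2$-colorable. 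This gives $\dim\Sigma_{\Gamma_h}=2e+1+\delta_{\Gamma^\ast,\text{bipartite}}$. For $\dim\Delta_{\Gamma_h}$ I would take the explicit stabilizer cycles supplied by the lemmas: two per $v$-face (Lemma~\ref{lm:stabGens-tsc-1}) and, because $\rm{F}=\rm{F}_r$ is exactly the $v$-faces and a rank-$3$ edge meets every vertex, two per $f$-face (Lemma~\ref{lm:stabGens-tsc-1-f}), the $e$-face cycles being dependent on these. That is $2(|V(\Gamma)|+|F(\Gamma)|)=2e+2\chi$ generators, among which a homological argument should expose exactly $1+\delta_{\Gamma^\ast,\text{bipartite}}$ independent relations, so $s=2e+2\chi-1-\delta_{\Gamma^\ast,\text{bipartite}}$. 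Substituting, $k=\tfrac12\big((2e+1+\delta)-(2e+2\chi-1-\delta)\big)=1+\delta_{\Gamma^\ast,\text{bipartite}}-\chi$ and $r=6e-k-s=4e-\chi$.

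Finally, the distance bound $d\le\ell$ is immediate from Lemma~\ref{lm:tsc-distance}, since $\ell$ is defined to be the number of rank-$3$ edges in a minimum-weight hypercycle of $\Sigma_{\Gamma_h}\setminus\Delta_{\Gamma_h}$. I expect the main obstacle to be the two appearances of $\delta_{\Gamma^\ast,\text{bipartite}}$: showing that $\operatorname{coker}\partial$ has dimension $1+\delta$ and that the $2e+2\chi$ face cycles carry exactly $1+\delta$ relations both require isolating the single ``color'' relation that always exists from the extra topological relation that survives only when $\Gamma^\ast$ is bipartite, and confirming that the $e$-face cycles and the boundaries $\partial(f)$ introduce nothing further. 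The qubit count and the reduction to $\dim\Sigma_{\Gamma_h}$ and $\dim\Delta_{\Gamma_h}$ are routine by comparison.
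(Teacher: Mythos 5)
Your plan follows the paper's proof almost step for step: $n=6e$ from the vertex count, $\dim\mc{L}_{\Gamma_h}=|E(\Gamma_h)|-\rk_2(I_{\Gamma_h})=8e-(6e-1-\delta)=2e+1+\delta$ (writing $\delta$ for $\delta_{\Gamma^\ast,\text{bipartite}}$), two stabilizer hypercycles per $v$-face and per $f$-face with $1+\delta$ relations among the resulting $2(v+f)$ generators, and then $k$ and $r$ from the subsystem identities, with $d\le\ell$ from Lemma~\ref{lm:tsc-distance}. Your derivation of the cokernel dimension --- contracting the $2$-regular rank-$2$ subgraph into its $v+f$ components (the $f$-face boundaries and the inner cycles) and solving the parity system the rank-$3$ edges impose on the component labels --- is a clean restatement of the paper's Lemma~\ref{lm:rankH-1}, and your identification of when the extra kernel element exists ($2$-colorability of the faces of $\Gamma$, i.e.\ bipartiteness of $\Gamma^\ast$) matches the paper exactly.

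The one step you gloss over is the upper bound on $s$. Exhibiting the face cycles via Lemmas~\ref{lm:stabGens-tsc-1}~and~\ref{lm:stabGens-tsc-1-f} and counting their relations only gives $\dim\Delta_{\Gamma_h}\ge 2e+2\chi-1-\delta$; to get equality you must also show that no \emph{other} hypercycle lands in the stabilizer, i.e.\ that $W(\sigma)\notin\mc{G}$ whenever $\sigma$ has nontrivial homology. This is not automatic: since $\mc{G}=C(\mc{L}_{\Gamma_h})$, every cycle operator already lies in $C(\mc{G})$, so the only thing keeping a nontrivial cycle out of $S=\mc{G}\cap C(\mc{G})$ is its failure to lie in $\mc{G}$, and the paper devotes two separate lemmas (Lemma~\ref{lm:nontrivialCycleProp-1} and Lemma~\ref{lm:nontrivialCycleProp-2}) to establishing precisely this. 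Were it false, $s$ would be larger and $k$ correspondingly smaller, so it is a load-bearing step rather than bookkeeping. A smaller quibble: your aside that ``the $e$-face cycles'' are dependent is off for this particular construction --- here the promoted edges lie in the $e$-face boundaries, so an $e$-face boundary contains two rank-$3$ edges whose third vertices have odd degree in that edge set and it is not a hypercycle at all; the $e$-faces contribute nothing, which is consistent with your count but for a different reason.
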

\begin{proof}
Assume that $\Gamma$ has $v$ vertices, $f$ faces and
$e$ edges. Let us denote this by the tuple $(v,f,e)$, then  $\chi=v+f-e$. On applying Construction~\ref{proc:tcc-bombin}, we obtain a $2$-colex, $\Gamma_2$ with the parameters $(4e, v+f+e, 6e)$.
When we apply Construction~\ref{proc:tcc-new} to $\Gamma_2$, the resulting hypergraph $\Gamma_h$ has
$2e$ new vertices added to it. Further $2e$ edges are promoted to hyper edges, and as many new rank-2 edges
are created. Thus we have a hypergraph with $6e$ vertices, $2e$ hyperedges, $6e$ rank-2 edges.

The important thing to note is that the dimension of the hypercycle space of $\Gamma_h$ is related to 
$I_{\Gamma_{h}}$, the vertex-edge
incidence matrix of $\Gamma_h$. Let $E(\Gamma_h)$ denote the edges of $\Gamma_h$ including the
hyperedges. 
Then 
\ben
\dim \mc{L}_{\Gamma_h}  = |E(\Gamma_h)| - \rk_2(I_{\Gamma_h}), \label{eq:dimCycleSpace}
\een
where $\rk_2$ denotes the binary rank, \cite{duke85}.

 By Lemma~\ref{lm:rankH-1},
 $
\rk_2(I_{\Gamma_{h}})  = |V(\Gamma_h)|-1-\delta_{\Gamma^\ast,\text{bipartite}}.
$
It now follows that  
\be
\dim \mc{L}_{\Gamma_h}  &=&  |E(\Gamma_h)| - |V(\Gamma_h)| +1+\delta_{\Gamma^\ast,\text{bipartite}}.\\
&=& 8e - 6e+1+\delta_{\Gamma^\ast,\text{bipartite}}\\
&=& 2e+1+\delta_{\Gamma^\ast,\text{bipartite}}
\ee 

By Lemma~\ref{lm:stabGens-tsc-1}~and~\ref{lm:stabGens-tsc-1-f} every $v$-face and $f$-face of $\Gamma_2$ lead to two hypercycles in $\Gamma_h$. These are $2v+2f$ in number. 
But depending on whether $\Gamma^\ast$ is bipartite of these
only $s=2v+2f-1-\delta_{\Gamma^\ast,\text{bipartite}}$  are independent hypercycles. The dependencies
are as given below:
\ben
\prod_{f\in v\text{-faces}}  W(\sigma_1^f)& = & \prod_{f\in f\text{-faces}}  W(\sigma_2^f).\label{eq:vfaceDep0}
\een
If $\Gamma^\ast$ is bipartite then we have the following additional dependency. Let $\Gamma$ be
face-colored black and white so that 
$F(\Gamma) = F_1\cup F_2$, where $F_1$ and $F_2$ are the collection of black and white faces. Then 
\ben
\prod_{f\in f\text{-faces}} W(\sigma_1^f) \prod_{f\in  F_1} W(\sigma_2^f) & = & \prod_{f\in v\text{-faces}}  W(\sigma_2^f) \label{eq:vfaceDep1}\\
\prod_{f\in f\text{-faces}} W(\sigma_1^f) \prod_{f\in F_2} W(\sigma_2^f)&=& \prod_{f\in v\text{-faces}} W(\sigma_1^f)W(\sigma_2^f) \label{eq:vfaceDep2}
\een
(Note that among equations~\eqref{eq:vfaceDep0}--\eqref{eq:vfaceDep2} only two  are independent.)
All these are of trivial homology. There are no other independent cycles of trivial homology. 
Furthermore, Lemma~\ref{lm:nontrivialCycleProp-1}~and~\ref{lm:nontrivialCycleProp-2} show that hypercycles of nontrivial homology are not in the gauge group. Thus all the remaining (nontrivial) hypercycles are not in the stabilizer. 
We can now compute the number of encoded qubits as follows.
\be
2k&=&\dim C(\mc{G})-s\\
 & = & 2e+1+\delta_{\Gamma^\ast,\text{bipartite}} - (2v+2f-1-\delta_{\Gamma^\ast,\text{bipartite}})\\
& =& 2+2\delta_{\Gamma^\ast,\text{bipartite}}+2(e- v-f),
\ee
which gives  $k= 1+\delta_{\Gamma^\ast,\text{bipartite}}-\chi$  encoded qubits.
The number of gauge qubits $r$ can now be computed as follows:
\be
r &=& n-k-s\\
 & = &  6e - (1+\delta_{\Gamma^\ast,\text{bipartite}} -\chi) - (2v+2f-1-\delta_{\Gamma^\ast,\text{bipartite}})\\
& = & 6e-2v-2f = 4e-\chi.
\ee
The bound on distance follows from Lemma~\ref{lm:tsc-distance}.
\end{proof}

\begin{remark}
Note that there are no planar non-bipartite graphs $\Gamma^\ast$ which satisfy the constraint in 
Theorem~\ref{th:tsc-1}. 
\end{remark}

\begin{remark}
We might consider a variation is possible on the above, namely,  adding the hyper edges in the 
$f$-faces as opposed to the $v$-faces. This however does not lead to any new codes that are not constructible using Theorem~\ref{th:tsc-1}. Adding them in the $f$-faces is equivalent to applying Theorem~\ref{th:tsc-1} 
to the dual of $\Gamma$.
\end{remark}

In Theorem~\ref{th:tsc-1},  when $\Gamma^\ast$ is bipartite, the subsystem codes coincide with those obtained from \cite{bombin10}.
However in this situation,  a  different choice of  
$F$ in Construction~\ref{proc:tsc-new}  gives another family of codes that differ from \cite{bombin10}
and Theorem~\ref{th:tsc-1}. These codes are considered next. But first we need an intermediate result
about the hypercycles in $\Delta_{\Gamma_h}$ those that define the  stabilizer. Some of such as those in  Fig.~\ref{fig:stabGen-v-face-tsc-2-a}  are similar to those in  Fig.~\ref{fig:stabGen-v-face-1} but some such as those in 
Fig.~\ref{fig:stabGen-v-face-tsc-2-b} are not. 

\begin{figure}[htb]

 \subfigure[A hypercycle $\sigma_1$ for a $v$-face in $F$.]{
\includegraphics{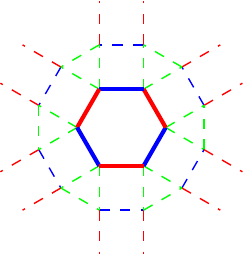}
   \label{fig:rank2Cycle-th2}
   }
\subfigure[A cycle $\sigma_2$ of rank-2  and rank-3 edges, shown in bold.]
{

\includegraphics{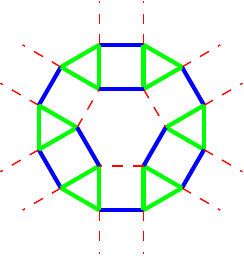}
   \label{fig:rank3Cycle-th2}
   }
\subfigure[Decomposing  $\sigma_2$ so that $W(\sigma_2)$ 
 can be generated using the elements of $\mc{G}$.]
 {

\includegraphics{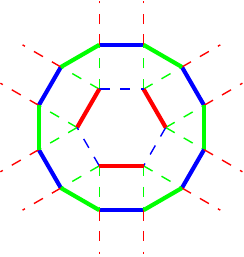}
   \label{fig:rank3Cycle-decomp}
   }

\caption{(Color online) Stabilizer generators for a $v$-face in $\rm{F}$, for the subsystem codes in Theorem~\ref{th:tsc-2}. Also shown is  the decomposition for  $W(\sigma_2)$. The decomposition for $W(\sigma_1)$ is same as in Fig.~\ref{fig:rank2Cycle-th2}.}
\label{fig:stabGen-v-face-tsc-2-a}
\end{figure}

\begin{figure}[htb]
\includegraphics{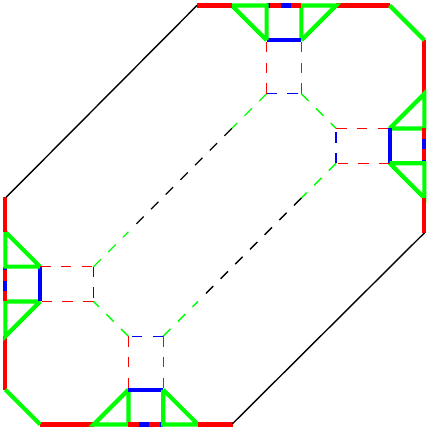}
\caption{(Color online) Stabilizer generators for a $v$-face in $\rm{F}_r\setminus \rm{F}$, for the subsystem codes in Theorem~\ref{th:tsc-2}. i) $\sigma_1= \partial(f)$ (not shown) and ii) $\sigma_2$ (in bold) consists of the rank-3 edges of all the adjacent $f$-faces in $\rm{F}$ adjacent through an $e$-face and the rank-2 edges connecting them.  The decomposition for $W(\sigma_2)$ is shown in Fig.~\ref{fig:stabGen-v-face-tsc-2-c}.} \label{fig:stabGen-v-face-tsc-2-b}
\end{figure}

Before, we give the next construction, we briefly recall the definition of a medial graph. The medial 
graph of a graph $\Gamma$ is obtained by placing a vertex on every edge of $\Gamma$ and an edge between
two vertices if and only if they these associated edges in $\Gamma$ are incident on the same vertex. 
We denote the medial graph of $\Gamma$ by $\Gamma_m$.

\begin{figure}[htb]
\includegraphics{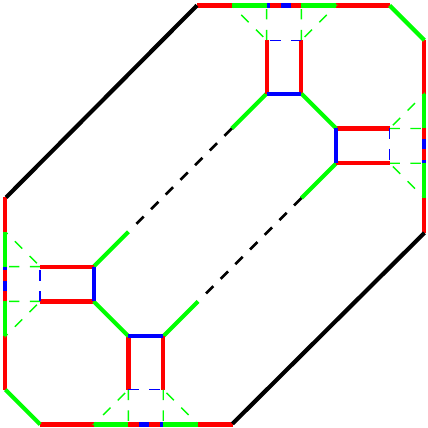}
\caption{(Color online)  Decomposition for  $W(\sigma_2)$. The product of the link operators shown in bold edges gives $W(\sigma_2)$. 
}
\label{fig:stabGen-v-face-tsc-2-c}

\end{figure}

\begin{theorem}\label{th:tsc-2}
Let $\Gamma$ be a graph whose vertices have even degrees greater than 2 and $\Gamma_m$ its medial graph. Construct the
2-colex $\Gamma_2$ from $\Gamma_m^\ast$ using Construction~\ref{proc:tcc-bombin}. 
Since $\Gamma_m^\ast$ is bipartite,  the set of v-faces of $\Gamma_2$, denoted $\rm{F}_r$, form a bipartition $\rm{F}_v\cup \rm{F}_f$, where $|\rm{F}_v| = |V(\Gamma)|$.
Apply Consruction~
\ref{proc:tsc-new} with the set $\rm{F}_v\subsetneq \rm{F}$ such that 
the rank-3 edges are not in the boundaries of the $e$-faces of  $\Gamma_2$.  
Let $\ell$ be the number of rank-3 edges in a cycle in $\Sigma_{\Gamma_h}\setminus \Delta_{\Gamma_h}$.
Then we obtain a 
\ben
[[10e,1-\chi+\delta_{\Gamma^\ast,\text{bipartite}}, 6e-\chi, d\leq\ell]]\label{eq:tscParams-2}
\een subsystem code, where $e=|E(\Gamma)|$. 
\end{theorem}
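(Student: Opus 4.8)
The plan is to follow the proof of Theorem~\ref{th:tsc-1} step for step: do the Euler-characteristic bookkeeping to fix $n$ and the vertex/edge counts of $\Gamma_h$, obtain $\dim C(\mc{G})$ from the binary rank of the incidence matrix, count the independent stabilizer generators $s$ from the structural lemmas, and read off $k$, $r$, and the distance. I begin by tracking parameters along the chain $\Gamma\to\Gamma_m\to\Gamma_m^\ast\to\Gamma_2\to\Gamma_h$. Writing $\Gamma=(v,f,e)$ with $\chi=v+f-e$, the medial graph is $4$-regular with $|V(\Gamma_m)|=e$, $|E(\Gamma_m)|=2e$, hence $|F(\Gamma_m)|=v+f$; its faces are naturally $2$-colored into the $v$ faces coming from vertices of $\Gamma$ and the $f$ faces coming from faces of $\Gamma$, which is why $\Gamma_m^\ast$ is bipartite and why the $v$-faces of $\Gamma_2$ split as $\rm{F}_v\cup\rm{F}_f$ with $|\rm{F}_v|=v$. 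Dualizing gives $\Gamma_m^\ast=(v+f,e,2e)$, and feeding this into Construction~\ref{proc:tcc-bombin} (which sends an input with $E_0$ edges to a $2$-colex with $4E_0$ vertices, $6E_0$ edges, and one face per vertex/edge/face of the input) yields $\Gamma_2$ with $|V(\Gamma_2)|=8e$, $|E(\Gamma_2)|=12e$, and $v+f$ $v$-faces, $e$ $f$-faces, $2e$ $e$-faces. Since a $v$-face arising from a degree-$d$ vertex of $\Gamma_m^\ast$ has size $2d$ and the $\rm{F}_v$-vertices of $\Gamma_m^\ast$ are exactly the vertices of $\Gamma$, the faces of $\rm{F}_v$ have sizes summing to $\sum_{u\in V(\Gamma)}2\deg_\Gamma(u)=4e$; hence applying Construction~\ref{proc:tsc-new} with $\rm{F}=\rm{F}_v$ adds $2e$ new vertices, promotes $2e$ edges to rank-$3$, and creates $2e$ new rank-$2$ edges. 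Thus $|V(\Gamma_h)|=10e=n$ and $|E(\Gamma_h)|=14e$ ($12e$ rank-$2$, $2e$ rank-$3$).

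Next I would use Eq.~\eqref{eq:dimCycleSpace}. Invoking the incidence-matrix rank result (the analog of Lemma~\ref{lm:rankH-1}, where one must verify that the rank deficiency of $I_{\Gamma_h}$ is again controlled by whether $\Gamma^\ast$ is bipartite), $\rk_2(I_{\Gamma_h})=|V(\Gamma_h)|-1-\delta_{\Gamma^\ast,\text{bipartite}}=10e-1-\delta_{\Gamma^\ast,\text{bipartite}}$, so
\[
\dim\mc{L}_{\Gamma_h}=14e-(10e-1-\delta_{\Gamma^\ast,\text{bipartite}})=4e+1+\delta_{\Gamma^\ast,\text{bipartite}}.
\]
Since $\mc{G}=C(\mc{L}_{\Gamma_h})$ by Theorem~\ref{th:suchara-Const} and the Pauli group is its own double centralizer, $\dim C(\mc{G})=\dim\mc{L}_{\Gamma_h}=4e+1+\delta_{\Gamma^\ast,\text{bipartite}}$.

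The crux is the stabilizer count, and here I would argue exactly as in Theorem~\ref{th:tsc-1}: the independent stabilizer hypercycles are carried by the $v$-faces and $f$-faces of $\Gamma_2$, which number $v+f$ and $e$ respectively, each contributing two. A $v$-face in $\rm{F}=\rm{F}_v$ gives its two cycles by Lemma~\ref{lm:stabGens-tsc-1} (Fig.~\ref{fig:stabGen-v-face-tsc-2-a}); a $v$-face in $\rm{F}_r\setminus\rm{F}=\rm{F}_f$ gives them by the two hypercycles of Fig.~\ref{fig:stabGen-v-face-tsc-2-b}, whose $\sigma_2$ decomposes as in Fig.~\ref{fig:stabGen-v-face-tsc-2-c}; and an $f$-face---which in this construction carries rank-$3$ edges in its boundary, unlike in Theorem~\ref{th:tsc-1}---gives them by an argument paralleling Lemmas~\ref{lm:stabGens-tsc-1} and~\ref{lm:stabGens-tsc-1-f}, while the $e$-faces yield only rank-$2$ boundary cycles that by Corollary~\ref{co:rank2Cycle} lie in $S$ but are dependent. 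This produces $2(v+f)+2e$ candidate generators. As in Theorem~\ref{th:tsc-1} they obey one trivial-homology relation always and a second precisely when $\Gamma^\ast$ is bipartite; and using the analogs of Lemmas~\ref{lm:nontrivialCycleProp-1} and~\ref{lm:nontrivialCycleProp-2} one checks that every remaining nontrivial-homology hypercycle lies outside the gauge group, so that there are no further dependencies. Hence $s=2v+2f+2e-1-\delta_{\Gamma^\ast,\text{bipartite}}=4e-1+2\chi-\delta_{\Gamma^\ast,\text{bipartite}}$.

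Finally, $2k=\dim C(\mc{G})-s$ gives $k=1-\chi+\delta_{\Gamma^\ast,\text{bipartite}}$, and $r=n-k-s=6e-\chi$, while $d\leq\ell$ follows directly from Lemma~\ref{lm:tsc-distance}. I expect the main obstacle to be the stabilizer count: verifying that the $f$-faces, whose boundaries now contain rank-$3$ edges (a feature absent in Theorem~\ref{th:tsc-1}), each still contribute two independent stabilizer cycles, and pinning down exactly the $1+\delta_{\Gamma^\ast,\text{bipartite}}$ dependencies among the $2(v+f)+2e$ generators while excluding any hidden relation coming from nontrivial homology.
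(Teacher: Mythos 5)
Your parameter bookkeeping along $\Gamma\to\Gamma_m\to\Gamma_m^\ast\to\Gamma_2\to\Gamma_h$, the computation $\dim\mc{L}_{\Gamma_h}=4e+1+\delta_{\Gamma^\ast,\text{bipartite}}$ via Lemma~\ref{lm:rankH-1}, and the final arithmetic for $k$, $r$ and $d$ all match the paper. The gap is exactly where you predicted it: the stabilizer count. You attribute the $2e$ ``extra'' generators to the $f$-faces (two independent hypercycles each, $e$ faces) and declare the $e$-face boundary cycles dependent. The paper does the opposite: each of the $2e$ four-sided $e$-faces contributes \emph{one} generator --- its boundary is a pure rank-2 cycle, so $W(\partial(f))\in S$ immediately by Corollary~\ref{co:rank2Cycle} --- and the $f$-faces contribute \emph{no} further independent generators. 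Your version is not just unproven but problematic as stated: in this construction the $f$-faces carry rank-3 edges in their boundary, and the apex vertex of such a rank-3 triangle (the new vertex in $V(\Gamma_h)\setminus V(\Gamma_2)$) would have odd degree in the putative boundary cycle, so $\partial(f)$ for an $f$-face is not even a hypercycle without augmentation. Neither Lemma~\ref{lm:stabGens-tsc-1} (which requires an inserted inner face $f'$) nor Lemma~\ref{lm:stabGens-tsc-1-f} (which requires no rank-3 edges in the boundary, and whose second part requires $\rm{F}=\rm{F}_r$, false here since $\rm{F}=\rm{F}_v\subsetneq\rm{F}_r$) applies, so there is no ``paralleling argument'' to invoke. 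The totals agree only by the numerical coincidence $2\cdot e = 1\cdot 2e$.

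The fix is to follow the paper's accounting: two generators per $v$-face in $\rm{F}_v$ (Lemma~\ref{lm:stabGens-tsc-1}, Fig.~\ref{fig:stabGen-v-face-tsc-2-a}); two per $v$-face in $\rm{F}_r\setminus\rm{F}$, namely $\partial(f)$ plus the hypercycle of Fig.~\ref{fig:stabGen-v-face-tsc-2-b} built from the rank-3 edges of the adjacent $\rm{F}_v$-faces (decomposed as in Fig.~\ref{fig:stabGen-v-face-tsc-2-c}) --- this part you did identify correctly; and one per $e$-face. One then exhibits the single always-present relation and the one additional relation when $\Gamma^\ast$ is bipartite explicitly (they are not the same relations as in Theorem~\ref{th:tsc-1}, so ``as in Theorem~\ref{th:tsc-1}'' does not suffice), and invokes Lemmas~\ref{lm:nontrivialCycleProp-1} and~\ref{lm:nontrivialCycleProp-2} to rule out contributions from homologically nontrivial cycles. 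With that substitution your proof closes.
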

\begin{proof}
The proof is somewhat similar to that of Theorem~\ref{th:tsc-1}, but there are important differences.
Suppose that $\Gamma$ has $v$ vertices, $f$ faces and $e$ edges. Let us denote this as
the tuple $(v,f,e)$. The medial graph $\Gamma_m$ is 4-valent and has $e$ vertices, $v+f$ faces and $2e$
edges. The dual graph $\Gamma_m^\ast$ has the parameters $(v+f,e,2e)$. Furthermore, $\Gamma_m^\ast$ is bipartite. The 2-colex
$\Gamma_2$ has the parameters, $(8e,v+f+3e, 12e)$. Of the $v+f+3e$ faces 
$v+f$ are $v$-type, $e$ are $f$-type and $2e$ are $e$-type. 
The hypergraph has 
$10e$ vertices because a new vertex is added for every pair of rank-2 edge incident on the $v$-faces
in $\rm{F}_v$. These incident edges are all of one color, which are a third of the total edges of 
$\Gamma_m^\ast$ i.e., $(12e/3)$.  
Since a rank-3 edge is added only on one end of
these edges for every pair,  this implies that $2e$  
edges are promoted to rank-3 edges, as many new vertices and 
new rank-2 edges are added to form the hypergraph $\Gamma_h$.

By Lemma~\ref{lm:rankH-1}, the rank of the vertex-edge incidence matrix of $\Gamma_h$ is $|V(\Gamma_h)|-1-\delta_{\Gamma^\ast,\text{bipartite}} = 10e-1-\delta_{\Gamma^\ast,\text{bipartite}}$. The total number
of edges of $\Gamma_h$ is $14e$ including the rank-3 edges. Thus the rank of the cycle space of 
$\Gamma_h$ is 
\ben
\dim \mc{L}_{\Gamma_h}  &= &14e-10e+1+\delta_{\Gamma^\ast,\text{bipartite}}\\
&=&4e+1+\delta_{\Gamma^\ast,\text{bipartite}}.
\een

The stabilizer generators of this code are somewhat different than those in Theorem~\ref{th:tsc-1}.
Recall that  the $v$-faces form a bipartition, $\rm{F}_v\cup \rm{F}_f=\rm{F} \cup (\rm{F}_r\setminus \rm{F})$, where $|\rm{F}_v|=v$
and $|\rm{F}_f|=f$. We insert the rank-3 edges only in the faces in $\rm{F}$, and by Lemma~\ref{lm:stabGens-tsc-1} each of these faces leads to two stabilizer generators. These are illustrated in Fig.~\ref{fig:stabGen-v-face-tsc-2-a}. The remaining $v$-faces namely those in $\rm{F}_r\setminus \rm{F}$, have no rank-3 edges
in their boundary. Therefore, by Lemma~\ref{lm:stabGens-tsc-1-f} there is a stabilizer generator
associated with the boundary of the face. The other generator associated to a face in $\rm{F}_r\setminus \rm{F}$ is  slightly more complicated.  It is illustrated in Fig.~\ref{fig:stabGen-v-face-tsc-2-b}. The idea behind the decomposition so that it is an element of the gauge group is illustrated 
Fig.~\ref{fig:stabGen-v-face-tsc-2-c}.

Thus both the $v$-faces of $\Gamma_2$ give rise to two types of stabilizer generators. 
Since these are $v+f$ in number, we have $2(v+f)$ due to them. 
Each of the $e$-faces gives rise to one stabilizer generator giving $2e$ more generators. Thus there are 
totally $2(v+f)+2e$. However there are  some dependencies.
\ben
\prod_{f\in   \rm{F}_v} W(\sigma_1^f)&=& \prod_{f\in e\text{-faces}} W(\sigma_1^f)\prod_{f\in \rm{F}_f} W(\sigma_1^f)W(\sigma_2^f)
\een
When $\Gamma^\ast$ is bipartite, then it induces a bipartition on the 
$v$-faces in $\rm{F}_v=F_1\cup F_2$.  as well as the 
 $e$-faces, depending on whether the $e$-face is adjacent to a $v$-face in 
 $F_1$ or $F_2$. Denote this bipartition of $e$-faces as $E_1\cup E_2$.
 Then the following hold:
\be
\prod_{f\in   \rm{F}_v} W(\sigma_2^f) &=& \prod_{f\in  E_1}W(\sigma_1^f)\prod_{f\in  F_1}  W(\sigma_2^f)
\prod_{f\in  F_2}  W(\sigma_1^f)\\
\prod_{f\in  \rm{F}_v} W(\sigma_1^f) W(\sigma_2^f)&=& \prod_{f\in  E_2} W(\sigma_1^f)\prod_{f\in  F_1}  W(\sigma_1^f)\prod_{f\in  F_2}  W(\sigma_2^f)
\ee
Observe though there is only one new dependency when $\Gamma^\ast$ is bipartite.
The $f$-faces do not give rise to anymore independent 
generators. Thus there are $s=2(v+f+e)-1-\delta_{\Gamma^\ast,\text{bipartite}}$ independent
cycles of trivial homology. The remaining cycles are of nontrivial homology. By Lemma~\ref{lm:nontrivialCycleProp-1}~and~\ref{lm:nontrivialCycleProp-2}, these cycles are not in the gauge group.
Therefore the number of encoded qubits is given by 
\be
2k& =& \dim \mc{L}_{\Gamma_h}- s\\
&=& 4e+1+\delta_{\Gamma^\ast,\text{bipartite}}- 2(v+f+e)+1+\delta_{\Gamma^\ast,\text{bipartite}}\\
&=& 2+2\delta_{\Gamma^\ast,\text{bipartite}}+2(e-v-f)\\
&=& 2+2\delta_{\Gamma^\ast,\text{bipartite}}- 2\chi
\ee
Thus $k=1-\chi+\delta_{\Gamma^\ast,\text{bipartite}}$. It is now straightforward to compute the number of gauge qubits as 
$r=n-k-s = 10e-(1+\delta_{\Gamma^\ast,\text{bipartite}}-\chi)-2(v+f+e)+1+\delta_{\Gamma^\ast,\text{bipartite}} = 6e-\chi$.
The bound on distance follows from Lemma~\ref{lm:tsc-distance}.
\end{proof}

Theorem~\ref{th:tsc-2} can be strengthened without having to go through a medial graph but rather 
starting with an arbitrary graph $\Gamma$ and then constructing a 2-colex via Construction~\ref{proc:tcc-bombin}. 
We now demonstrate that  Construction~\ref{proc:tcc-new} gives rise to subsystem codes that 
are different from those obtained in \cite{bombin10}.
\begin{lemma}\label{lm:bombinHypergraphProperty}
Suppose that we have a topological subsystem code obtained by Construction~\ref{proc:tsc-bombin} 
from a  2-colex $\Gamma$. Then in the associated hypergraph shrinking 
the hyperedges to a vertex  gives a 6-valent graph and further replacing any multiple edges by a single edge gives us a 2-colex. 
\end{lemma}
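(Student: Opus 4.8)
The plan is to first make the associated hypergraph $\Gamma_h$ explicit in the language of Construction~\ref{proc:tsc-suchara}. Recall that Bombin's construction (Construction~\ref{proc:tsc-bombin}) takes the dual $\Gamma^\ast$, a $3$-vertex-colorable triangulation, thickens every edge into a $4$-sided face (an $e$-face) and blows up every vertex into a polygonal face; the orientation-and-coloring rule is designed so that the small triangle sitting inside each triangular face of $\Gamma^\ast$ is monochromatic in the link type that plays the role of $Z$. These monochromatic $Z$-triangles are exactly the derived-graph images of rank-$3$ hyperedges, so $\Gamma_h$ is recovered from $\overline{\Gamma}$ by contracting each such triangle back to a single rank-$3$ edge. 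At this stage I would record two structural facts: the triangles are pairwise disjoint and their corners partition the $3|V(\Gamma)|$ vertices into triples, and consequently every vertex of $\Gamma_h$ is trivalent, carrying exactly one rank-$3$ edge together with exactly two rank-$2$ edges (one of each remaining color).

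Granting this, the $6$-valence is a direct count. Shrinking a rank-$3$ hyperedge to a vertex deletes the hyperedge and identifies its three endpoints; since each endpoint contributes only its two incident rank-$2$ edges, the merged vertex inherits $3\times 2 = 6$ rank-$2$ edges. As every vertex lies in precisely one hyperedge, shrinking all hyperedges makes every vertex $6$-valent (counting multiplicities), which is the first assertion.

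Next I would exhibit the multiple edges and collapse them. The key observation is that each $e$-face lies between two of these triangles $T_1$ and $T_2$ and contributes exactly two rank-$2$ edges, one of each non-$Z$ color, each joining a corner of $T_1$ to a corner of $T_2$. Hence after $T_1$ and $T_2$ are shrunk these two edges become a double edge between the corresponding vertices. Each triangle borders three $e$-faces, one along each side, so the six edges at each shrunken vertex group into three double edges; replacing every multiple edge by a single edge therefore yields a $3$-valent graph. Finally, the vertices of this graph are the triangular faces of $\Gamma^\ast$ and its edges are the adjacencies between them, so it is combinatorially the dual $(\Gamma^\ast)^\ast = \Gamma$, which is a $2$-colex by hypothesis.

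The main obstacle is the first step, namely pinning down the associated hypergraph and proving that its rank-$3$ edges are exactly the monochromatic triangles that partition the vertex set. This rests on verifying that Bombin's orientation-and-coloring rule really does render every such triangle monochromatic, and that the assignment is consistent with the commutation relations~\eqref{eq:commuteRelns}; once this correspondence is secured, the valence counts and the edge collapse are routine bookkeeping.
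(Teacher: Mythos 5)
Your proposal is correct and follows essentially the same route as the paper's proof: identify the rank-3 hyperedges with the triangular faces of $\Gamma^\ast$, contract them to get a graph on those faces that is 6-valent with a double edge across each shared edge of $\Gamma^\ast$, and recognize the simplified graph as $(\Gamma^\ast)^\ast = \Gamma$. You simply make explicit the bookkeeping (the $3\times 2$ valence count and the two rank-2 edges per $e$-face) that the paper's terser proof leaves implicit.
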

\begin{proof}
Construction~\ref{proc:tsc-bombin} adds a rank-3 edge in every face of $\Gamma^\ast$. On contracting 
these rank-3 edges we end up with a  graph whose vertices coincide with the faces of $\Gamma^\ast$. 
Each of these vertices is now 6-valent and between any two adjacent vertices there are two edges. On 
replacing  these multiple edges by a single edge, we end up with a cubic graph. Observe that the 
vertices of this graph are in one to one correspondence with the faces of $\Gamma^\ast$ while the edges 
are also in one to one correspondence 
with the edges of $\Gamma^\ast$. Further an edge is present only if two faces are adjacent. This is 
precisely the definition of the dual graph. Therefore, the resulting graph is the same as $\Gamma$
and a 2-colex. 
\end{proof}

\begin{theorem}
Construction~\ref{proc:tsc-new} results in codes which cannot be constructed using Construction~\ref{proc:tsc-bombin}. In particular,  all the codes of Theorem~\ref{th:tsc-2} are distinct
from Construction~\ref{proc:tsc-bombin} and the codes of Theorem~\ref{th:tsc-1} are distinct
when $\Gamma$ therein  is non-bipartite.
\end{theorem}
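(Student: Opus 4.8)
The plan is to use Lemma~\ref{lm:bombinHypergraphProperty} as a structural test that \emph{every} hypergraph coming from Construction~\ref{proc:tsc-bombin} must pass, and then to show that the hypergraphs produced by Construction~\ref{proc:tsc-new} in Theorems~\ref{th:tsc-1} and~\ref{th:tsc-2} fail it. Concretely, Lemma~\ref{lm:bombinHypergraphProperty} asserts that in any Bombin hypergraph, contracting the rank-3 edges yields a \emph{6-valent} multigraph whose simplification (merging parallel edges) is a 2-colex. Since the hypergraph is recoverable from the gauge group --- the rank-2 and rank-3 link operators together with their anticommutation pattern from Eq.~\eqref{eq:commuteRelns} determine $\Gamma_h$ up to isomorphism and a relabeling of colors --- this ``6-valence-plus-2-colex'' property is an invariant of the code. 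Thus it suffices to exhibit a violation of this property for the Construction~\ref{proc:tsc-new} hypergraphs.

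First I would dispose of Theorem~\ref{th:tsc-2} by a counting argument. By H4 the rank-3 edges are vertex-disjoint triangles, so the $2e$ rank-3 edges of the associated hypergraph cover exactly $6e$ of its $10e$ vertices, leaving $4e$ vertices in no rank-3 edge. Each such vertex is trivalent and untouched by the contraction, so the contracted graph carries vertices of degree $3$ as well as degree $6$; it is not 6-valent, and Lemma~\ref{lm:bombinHypergraphProperty} is violated. As a fully rigorous independent check I would compare parameters: from Eq.~\eqref{eq:tsc-bombin} every Bombin code satisfies $2n = 3(r-k+2)$, whereas substituting Eq.~\eqref{eq:tscParams-2} gives $2n - 3(r-k+2) = 2e + 3\delta_{\Gamma^\ast,\text{bipartite}} - 3$, which is nonzero for every positive integer $e$ and every $\delta_{\Gamma^\ast,\text{bipartite}}\in\{0,1\}$. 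Hence no Construction~\ref{proc:tsc-bombin} code shares the parameters of a Theorem~\ref{th:tsc-2} code.

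For Theorem~\ref{th:tsc-1} the counting argument no longer suffices, since here the $2e$ rank-3 edges cover all $6e$ vertices and the contracted graph \emph{is} 6-valent. When $\Gamma^\ast$ is non-bipartite the same parameter comparison settles matters, as Eq.~\eqref{eq:tscParams-1} gives $2n - 3(r-k+2) = 3(\delta_{\Gamma^\ast,\text{bipartite}}-1) = -3 \neq 0$. The genuinely delicate case is $\delta_{\Gamma^\ast,\text{bipartite}}=1$ together with $\Gamma$ non-bipartite, where the parameters coincide with a Bombin code and only the second half of Lemma~\ref{lm:bombinHypergraphProperty} can separate them. Here I would analyze the simplification of the contracted 6-valent graph and argue that it fails to be a 2-colex precisely when $\Gamma$ is non-bipartite: by Theorem~\ref{th:tcc-new} a cubic graph is a 2-colex if and only if it is generated by a bipartite graph, and the base graph recovered from the contracted hypergraph of Construction~\ref{proc:tsc-new} is a graph determined by $\Gamma$, so 3-face-colorability of the simplification should be equivalent to bipartiteness of $\Gamma$. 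A non-bipartite $\Gamma$ would then force a non-2-colex, contradicting Lemma~\ref{lm:bombinHypergraphProperty}.

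The main obstacle is exactly this last step: identifying the graph obtained by contracting the rank-3 edges and collapsing parallel edges of the Theorem~\ref{th:tsc-1} hypergraph, and showing that its 3-face-colorability is governed by the bipartiteness of $\Gamma$ via Theorem~\ref{th:tcc-new}. This requires tracking how the $v$-face and $e$-face structure of $\Gamma_2$ behaves under contraction. A secondary point to justify carefully is that the hypergraph --- and hence the 6-valence/2-colex property --- is an invariant of the code rather than of our chosen presentation; absent this, one concludes only that the \emph{presentations} differ, which is why the parameter comparison is a valuable rigorous supplement wherever it applies.
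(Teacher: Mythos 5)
Your overall strategy coincides with the paper's: both proofs use Lemma~\ref{lm:bombinHypergraphProperty} as a necessary structural condition on any hypergraph arising from Construction~\ref{proc:tsc-bombin} and then show that the hypergraphs of Theorems~\ref{th:tsc-1} and~\ref{th:tsc-2} violate it. Your explicit count for Theorem~\ref{th:tsc-2} (the $2e$ disjoint rank-3 edges cover only $6e$ of the $10e$ vertices, so the contracted graph has degree-$3$ vertices alongside degree-$6$ ones and is not 6-valent) usefully fills in a step the paper merely asserts. Your parameter comparison is a genuine addition: every code of Eq.~\eqref{eq:tsc-bombin} satisfies $2n=3(r-k+2)$, while the codes of Eq.~\eqref{eq:tscParams-2}, and those of Eq.~\eqref{eq:tscParams-1} with $\delta_{\Gamma^\ast,\text{bipartite}}=0$, do not; this settles those cases without needing the hypergraph to be recoverable from the gauge group --- an invariance issue you rightly flag and which the paper's purely structural argument leaves implicit.

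The one branch that would fail is your ``delicate case'' $\delta_{\Gamma^\ast,\text{bipartite}}=1$ with $\Gamma$ non-bipartite. The paper's proof concludes distinctness ``when $\Gamma^\ast$ is non-bipartite,'' and its remark following Theorem~\ref{th:tsc-1} states that when $\Gamma^\ast$ \emph{is} bipartite the Theorem~\ref{th:tsc-1} codes coincide with Bombin's; the statement's ``$\Gamma$'' is evidently a slip for $\Gamma^\ast$, so in the case you single out no separation argument can exist. Concretely, contracting the rank-3 edges and merging parallel edges yields a cubic graph whose reduction is $\Gamma^\ast$ itself, so by Theorem~\ref{th:tcc-new} its 3-face-colorability is governed by bipartiteness of $\Gamma^\ast$, not of $\Gamma$; the equivalence you hoped to establish in that branch is the wrong one. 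Dropping it, your argument covers exactly the cases the paper's proof does, and does so more rigorously where the parameter test applies.
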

\begin{proof}
Let us assume that the Construction~\ref{proc:tsc-new} does not give us {\em any} new codes. Then every 
code constructed using this method gives a code that is already constructed using 
Construction~\ref{proc:tsc-bombin}. Lemma~\ref{lm:bombinHypergraphProperty} informs us that
contracting the rank-3 edges results in a 6-valent graph, which on replacing the multiple edges by
single edge gives us a 2-colex. 

But note that if we applied the same procedure to a graph that is obtained from the proposed 
construction, then we do not always satisfy this criterion. In particular, this is the case for the
subsystem codes of Theorem~\ref{th:tsc-2}. These codes do not give rise to a 6-valent lattice on 
shrinking the rank-3 edges to a single vertex. 

When we consider the codes of Theorem~\ref{th:tsc-1}, on contracting that rank-3 edges, we end with up
a 6-valent graph with double edges and replacing them leads to a cubic graph. In order that these codes
do not arise from Construction~\ref{proc:tsc-bombin}, it is necessary that this cubic graph is not a 
2-colex. And if it were a 2-colex then further reducing the $v$-faces of this graph should give us a 
a 2-face-colorable graph. 
But this reduction results in  the graph we started out with namely, $\Gamma^\ast$. Thus when 
$\Gamma^\ast$ in non-bipartite, our codes are distinct from those in \cite{bombin10}.
\end{proof}

\begin{lemma}\label{lm:rankH-1}
The vertex-edge incidence matrices of the hypergraphs in Theorems~\ref{th:tsc-1}~and~\ref{th:tsc-2} have rank $|V(\Gamma_h)|-1-\delta_{\Gamma^\ast,\text{bipartite}}$.
\end{lemma}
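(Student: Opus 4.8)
The plan is to compute the binary rank through its left kernel. Since $\rk_2(I_{\Gamma_h}) = |V(\Gamma_h)| - \dim \ker_2(I_{\Gamma_h}^{T})$, it suffices to show that the left kernel over $\F_2$ has dimension exactly $1+\delta_{\Gamma^\ast,\text{bipartite}}$. Identifying an $\F_2$-valued vertex labeling with the subset $T$ it indicates, this left kernel is precisely the space of \emph{even transversals}: subsets $T\subseteq V(\Gamma_h)$ with $|e\cap T|$ even for every rank-$2$ and every rank-$3$ edge $e$. I would establish two lower bounds on its dimension by explicit construction and then close the gap with a connectivity-and-consistency argument.

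First I would exploit the edge colouring. By Theorem~\ref{th:tsc-new} the hypergraph $\Gamma_h$ is trivalent and admits a proper $3$-edge-colouring in which all rank-$3$ edges carry a single colour, say $r$; hence the $g$-edges and the $b$-edges are each perfect matchings of $V(\Gamma_h)$ built only from rank-$2$ edges. An even transversal must contain both or neither endpoint of every $g$-edge and of every $b$-edge, so $T$ is forced to be a union of whole components of the $2$-factor $E_g\cup E_b$, i.e.\ a union of entire alternating $g/b$-cycles (the boundaries of the $r$-coloured faces). Identifying each cycle with the $r$-face it bounds, even transversals correspond to $\F_2$-labelings $z$ of the $r$-faces subject to one linear relation per $r$-edge: equality of the two labels across each rank-$2$ $r$-edge and even parity of the three labels across each rank-$3$ $r$-edge. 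This reduces the whole computation to counting the solutions of this labeling system.

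For the lower bound I would exhibit the generators explicitly. The set $T_0=V(\Gamma_2)$ of the original $2$-colex vertices is always an even transversal: every rank-$2$ edge of $\Gamma_h$ is either an unpromoted $\Gamma_2$-edge (two old vertices) or an edge of an inserted face $f'$ (two new vertices), while every rank-$3$ edge consists of the two old endpoints of a promoted $\Gamma_2$-edge together with a single new vertex of $f'$, so $|e\cap T_0|\in\{0,2\}$ for every $e$. This gives dimension at least $1$ with no hypothesis on $\Gamma^\ast$. When $\Gamma^\ast$ is bipartite I would build a second, independent transversal $T_1$ from the induced $2$-colouring $F(\Gamma)=F_1\cup F_2$ of the faces of $\Gamma$ (equivalently, of the vertices of $\Gamma^\ast$); lifted to the vertices of $\Gamma_h$ this colouring meets each edge evenly precisely because the two colour classes straddle each promoted edge symmetrically, yielding dimension at least $1+\delta_{\Gamma^\ast,\text{bipartite}}$.

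Finally—and this is the crux—I would prove there are no further even transversals, i.e.\ the dimension is at most $1+\delta_{\Gamma^\ast,\text{bipartite}}$. Using the $g/b$-decomposition, every even transversal is constant on each inserted face $f'$ and on each $g/b$-cycle, and propagating a label along the equality relations carried by the rank-$2$ $r$-edges determines it on the whole connected hypergraph up to the single global choice realised by $T_0$. The rank-$3$ parity relations then impose one closure condition, and the content of the lemma is that this condition can be relaxed in exactly one extra way—through the bipartition of $\Gamma^\ast$—and in no other; equivalently, the only $\F_2$ freedoms surviving all parity relations are the old/new split $T_0$ and, conditionally, the bipartite class $T_1$, with the genus of the surface contributing nothing to $\rk_2(I_{\Gamma_h})$. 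I expect the identical scheme to settle the hypergraph of Theorem~\ref{th:tsc-2}: there the $v$-faces in $\rm{F}_r\setminus\rm{F}$ and their associated cycles (Fig.~\ref{fig:stabGen-v-face-tsc-2-b}) replace the plain faces used above, so only the bookkeeping of which faces carry rank-$3$ edges changes, not the mechanism.
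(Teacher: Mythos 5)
Your framing is sound and genuinely different in organization from the paper's: you compute the left kernel of $I_{\Gamma_h}$ directly as the space of ``even transversals,'' and you use the $3$-edge-colouring to reduce to a labeling problem on the components of the $2$-factor $E_g\cup E_b$, whereas the paper block-decomposes $I_{\Gamma_h}$ relative to $I_{\Gamma_2}$, invokes $\rk_2(I_{\Gamma_2})=|V(\Gamma_2)|-1$, and characterizes any \emph{additional} dependency as a vector $b$ supported only on the rank-$3$ columns. Your baseline generator $T_0=V(\Gamma_2)$ is correct and corresponds exactly to the paper's $b=0$ dependency, and your reduction to one $\F_2$-relation per $r$-edge on the $g/b$-cycles is a clean reformulation (the two pictures are dual descriptions of the same kernel).

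The problem is that the decisive step is asserted rather than proven. First, the sentence ``the content of the lemma is that this condition can be relaxed in exactly one extra way---through the bipartition of $\Gamma^\ast$---and in no other'' is circular: it restates the claim $\dim\ker = 1+\delta_{\Gamma^\ast,\text{bipartite}}$ instead of establishing it. Moreover, the preceding claim that propagation along the rank-$2$ $r$-edge equality relations ``determines it on the whole connected hypergraph up to the single global choice'' cannot be right as stated: the rank-$2$ $r$-edges alone do not connect all the $g/b$-cycles (the inserted faces $f'$ are reached only through rank-$3$ edges), and if equality propagation alone pinned everything down the kernel would always be one-dimensional, contradicting the bipartite case. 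What is actually needed---and what the paper's proof spends Figs.~\ref{fig:dependency-hg}--\ref{fig:dependency-hg-th2} on---is the propagation of a \emph{nonzero} assignment through the rank-$3$ parity relations: one shows that support on a single rank-$3$ edge forces support on every rank-$3$ edge and forces an alternating two-colouring of the $f$-faces (resp.\ of the $v$-faces in $\mathrm{F}_r\setminus\mathrm{F}$ for Theorem~\ref{th:tsc-2}), so that a second kernel vector exists \emph{iff} $\Gamma^\ast$ is bipartite, and at most one such vector exists. Your construction of $T_1$ in the bipartite case (``the two colour classes straddle each promoted edge symmetrically'') is also too vague to verify the parity condition on each of the three edge types, so even the lower bound $1+\delta_{\Gamma^\ast,\text{bipartite}}$ is not fully secured. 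You need to carry out the propagation argument explicitly in your contracted picture (vertices $=$ $g/b$-cycles, relations from $r$-edges) and separately exhibit and check $T_1$; until then the proof has a genuine gap precisely at the point where the lemma's content lies.
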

\begin{proof}
We use the same notation as that of Theorems~\ref{th:tsc-1}~and~\ref{th:tsc-2}. Denote the vertex edge incidence matrix of 
$\Gamma_2$ as $I_{\Gamma_{2}}$. 
Depending on whether an edge in  $\Gamma_2$ is promoted to 
a  hyperedge in $\Gamma_h$ we can distinguish two types of edges in $\Gamma_2$. 
Suppose that the edges in $\{e_1,\ldots, e_l \}$ are not promoted while the edges in
$\{e_{l+1},\ldots, e_m \}$ are promoted.
\ben
I_{\Gamma_2} = \kbordermatrix{
    &e_1&\cdots&e_l&\vrule &e_{l+1}&\cdots& e_m \\ 
                &i_{11} &  \cdots  & i_{1l}&\vrule &\cdot &\cdots&i_{1m}\\
                & \vdots & \vdots & \ddots &\vrule& \vdots&\ddots&\vdots\\
                & i_{n1}  &   \cdots       &i_{nl }&\vrule&\cdot& \cdots& i_{nm}
}\label{eq:incidence2-colex}
\een
The vertex-edge incidence matrix of $\Gamma_{h}$ is related to that of $I_{\Gamma_2}$ as follows:
\ben
I_{\Gamma_h} &= &\kbordermatrix{
    &e_1&\cdots&e_l&\vrule &e_{l+1}&\cdots& e_m &\vrule&e_{m+1}&\cdots & e_{q}\\ 
                &i_{11} &  \cdots  & i_{1l}&\vrule &\cdot &\cdots&i_{1m}&\vrule&\\
                & \vdots & \vdots & \ddots &\vrule& \vdots&\ddots&\vdots&\vrule&&\bf{0}\\
                & i_{n1}  &   \cdots       &i_{nl }&\vrule&\cdot& \cdots& i_{nm}&\vrule&\\ 
                &   & \bf{0} &   &\vrule&  &\bf{I}&  &\vrule& &I_{\Gamma_h\setminus \Gamma_2}\\                
}\nonumber\\
& = & \left[\ba{ccc}\multicolumn{2}{c}{I_{\Gamma_2}}& 0 \\ 0 & I&I_{\Gamma_h\setminus \Gamma_2} \ea\right],\label{eq:incidenceHyper}
\een
where $I_{\Gamma_h\setminus \Gamma_2}$ is the incidence matrix of the subgraph obtained by restricting
to the vertices $ V(\Gamma_h)\setminus V(\Gamma_2)$. We already know that $\rk_2(I_{{\Gamma}_2})$
is $|V(\Gamma_2)|-1$.
Suppose there is an additional linear dependence among the rows of $I_{\Gamma_h}$.
More precisely, let  
\ben
b=\sum_{v\in V(\Gamma_2)} a_v \delta(v)  = 
\sum_{v\in V(\Gamma_h)\setminus V(\Gamma_2)} a_v \delta(v),\label{eq:b}
\een
where $\delta_{v}$ is the vertex-edge incidence vector of $v$.
Then $b$ must have no support on the edges in $\{e_1, \ldots, e_l \}\cup\{e_{m+1},\ldots, e_q \}$. It must have support only on the rank-3 edges of $\Gamma_h$.

Every rank-3 edge has the property that it is incident on exactly one vertex  $u\in V(\Gamma_h)\setminus (\Gamma_2)$ and exactly two vertices in 
$v,w\in V(\Gamma_2)$. Thus if a rank-3 edge has nonzero support in $b$, then $a_u\neq0$ and either
$a_v\neq 0$ or $a_w\neq0$ but not both.

\begin{center}

\begin{figure}[htb]	
\includegraphics{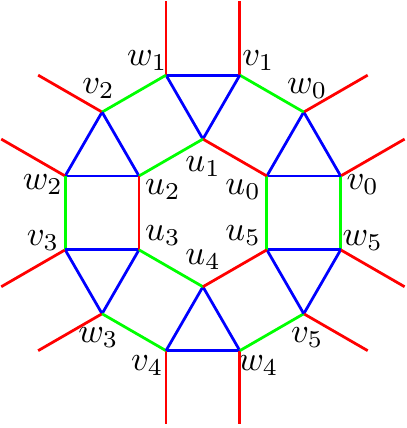}
\caption{ (Color online) If $b$ defined in Eq.~\eqref{eq:b} has support on one rank-3 edge of a $v$-face, then it has support on all the rank-3 edges of the $v$-face. Further, $\{a_{w_0}, a_{w_2}, a_{w_4}, \ldots \}\cup \{a_{v_1},a_{v_3},\ldots \}$ are all nonzero
or $\{a_{w_1}, a_{w_3}, \ldots \}\cup \{a_{v_0},a_{v_2},\ldots \}$ are all nonzero.}\label{fig:dependency-hg}
\end{figure}	
\end{center}

Suppose that a vertex $u_0 \in V(\Gamma_h)\setminus V(\Gamma_2)$ is such that $a_{u_0}\neq 0$.
Then because $b$ has no support on the edges in $\{e_{m+1},\ldots,e_{q} \}$, 
all the rank-2 neighbors of $u_0$, that is those which are connected by rank-2 edges are also such that $a_{u_i}\neq 0$. This implies that in a given $v$-face, for all the vertices of $u_i\in (V(\Gamma_h)\setminus V(\Gamma_2)) \cap f'$, we have $a_{u_i}\neq 0$. Further, only one of the rank-3 neighbors of $u_i$, namely $v_i,w_i$,  can have $a_{v_i}\neq 0 $ or $a_{w_i}\neq 0$, but not both. Additionally, pairs of these vertices must be adjacent as $b$ has no support on the rank-2 edges. 
Thus either $\{a_{w_0}, a_{w_2}, a_{w_4}, \ldots \}\cup \{a_{v_1},a_{v_3},\ldots \}$ are all nonzero
or $\{a_{w_1}, a_{w_3}, \ldots \}\cup \{a_{v_0},a_{v_2},\ldots \}$ are all nonzero.
Alternatively, we can say only the vertices in the support of an alternating set of rank-2 edges in the
boundary of the face can have nonzero $a_v$ in $b$. Consequently these vertices belong to an alternating
set of $f$-faces in the boundary of  $f$.

Consider now the construction in Theorem~\ref{th:tsc-1}, in this  rank-3 edges
are in the boundary of every $v$-face and $e$-face of $\Gamma_2$. Further, they are all connected.
Consider two adjacent $v$-faces as shown in Fig.~\ref{fig:dependency-hg-th1}.

\begin{figure}[htb]
\includegraphics{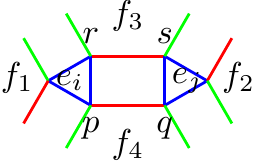}
\caption{(Color online) For the hypergraph in Theorem~\ref{th:tsc-1}, if $b$ has support on one rank-3 edge, then it has support on all rank-3 edges in
$\Gamma_h$. }\label{fig:dependency-hg-th1}
 \end{figure}
 
 If $ a_p\neq 0$, it implies that $a_r=0=a_s$ and $a_q\neq 0$.
If  the rank-3 edge  $e_j$ has support in $b$, then all the rank-3 edges incident on  $f_2$ must also be present. 
Since all the $v$-faces are connected, $b$ has support on all the rank-3 edges. 
Also note that the $f$-face $f_3$ has vertices in its boundary which are in the support of $b$.
In order that no edge from its boundary is in the support of $b$, all the vertices in its boundary
must be such that $a_v\neq 0$. The opposite holds for the vertices in $f_4$. None of these 
vertices must have $a_v\neq 0$. Thus the $f$-faces are portioned into two types and a consistent 
assignment of $a_v$ is possible if and only if the $f$-faces form a bipartition. In other words, 
$\Gamma^\ast$ is bipartite. Thus the additional linear dependency exists only when $\Gamma^\ast$
is bipartite.


Let us now consider the graph in Theorem~\ref{th:tsc-2}. In this case $\rm{F}$ and $\rm{F}_c\setminus \rm{F}$ form 
a bipartition. And only the
the set  $v$-faces in $F$ have the rank-3 edges in their boundary. Consider two adjacent $v$-faces of 
$\Gamma_2$,  $f_1\in \rm{F}_c\setminus \rm{F}$, $f_2 \in \rm{F}$, as shown in Fig.~\ref{fig:dependency-hg-th2}. 

\begin{figure}[htb]
\includegraphics{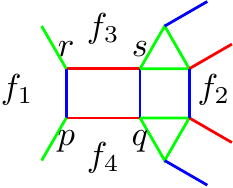}
\caption{(Color online) For the hypergraph in Theorem~\ref{th:tsc-2}, if $b$ has support on one rank-3 edge, then it has support on all rank-3 edges in
$\Gamma_h$. }\label{fig:dependency-hg-th2}
\end{figure}
In this case  $a_p=a_q=a_r=a_s$. So either all the vertices of $f_1$ are present
or none at all. This creates a bipartition of the $v$-faces which are not having rank-3 edges in their
boundary. Thus a consistent assignment of $a_v$ is possible if and only if the rest of the
$v$-faces in $F_c\setminus F $ form a bipartition. 
Since these are arising form the faces of $\Gamma$, this means that 
an additional linear dependency exists if and only if $\Gamma^\ast$ is bipartite. 
\end{proof}

\begin{lemma}\label{lm:nontrivialCycleProp-1}
Suppose that $\sigma$ is a homologically nontrivial hyper cycle of $\Gamma_h$ in Theorem~\ref{th:tsc-1}~or~\ref{th:tsc-2}. 
Then  $\sigma$ must contain some rank-3 edge(s). 
\end{lemma}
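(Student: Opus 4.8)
The plan is to prove the contrapositive: I will show that any hypercycle $\sigma$ built entirely from rank-2 edges is homologically trivial, so that a homologically nontrivial $\sigma$ is forced to use at least one rank-3 edge. The natural starting point is Corollary~\ref{co:rank2Cycle}, which already tells us that such a $\sigma$ satisfies $W(\sigma)\in S$, i.e.\ $\sigma\in\Delta_{\Gamma_h}$. What remains is to argue that every element of $\Delta_{\Gamma_h}$ has trivial homology, and I would do this by exhibiting $\sigma$ explicitly as a sum over $\F_2$ of the contractible face-cycles identified in Lemmas~\ref{lm:stabGens-tsc-1} and~\ref{lm:stabGens-tsc-1-f}.

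To make this decomposition concrete I would first exploit the local structure at the newly added vertices. Every vertex in $V(\Gamma_h)\setminus V(\Gamma_2)$ is incident to exactly one rank-3 edge and the two rank-2 edges of the inner face $f'$ containing it. Since $\sigma$ uses no rank-3 edge, the parity condition at such a vertex forces $\sigma$ to contain either both or neither of these $f'$-edges. Because $\partial(f')$ is a single cycle, this all-or-nothing behaviour propagates around $f'$, so for each inner face either $\partial(f')\subseteq\sigma$ or $\partial(f')\cap\sigma=\emptyset$. Each $\partial(f')$ bounds the disc $f'$ and is therefore contractible; adding these boundaries to $\sigma$ leaves a residual cycle $\sigma_0$, homologous to $\sigma$, that is supported entirely on the unpromoted edges of $\Gamma_2$.

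The heart of the argument is then to show $\sigma_0$ is contractible on the surface. Since $\sigma_0$ avoids every promoted edge, at each original vertex it uses only the unpromoted incident edges, so it lives in the subgraph of $\Gamma_2$ with the promoted edges deleted. I would argue that every cycle of this subgraph is homologically trivial by matching its cycle-space dimension, obtained from an Euler count, against the number of independent contractible face-cycles already produced by Lemmas~\ref{lm:stabGens-tsc-1} and~\ref{lm:stabGens-tsc-1-f} and tabulated in the proofs of Theorems~\ref{th:tsc-1} and~\ref{th:tsc-2} (the $f$-faces for Theorem~\ref{th:tsc-1} and the analogous unmodified $v$-faces for Theorem~\ref{th:tsc-2}). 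As each such face is a disc and their count saturates the dimension, no nontrivial class survives, so $\sigma_0$, and hence $\sigma$, is a sum of contractible cycles.

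The step I expect to be the main obstacle is precisely this last one: confirming that deleting the promoted edges genuinely confines $\sigma_0$ to contractible regions rather than allowing it to wind around a handle. The delicate point is that the promoted edges form only an alternating set, so their removal does not obviously kill nontrivial homology, and the safest route is the dimension-matching argument above rather than a purely pictorial one. The reasoning is uniform across Theorems~\ref{th:tsc-1} and~\ref{th:tsc-2}, differing only in which faces supply the bounding discs.
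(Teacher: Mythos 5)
Your opening reduction is sound and coincides with the paper's first step: the parity argument at the vertices of $V(\Gamma_h)\setminus V(\Gamma_2)$ (each incident to one rank-3 edge and two edges of $\partial(f')$) correctly shows that a rank-2-only hypercycle contains each inner boundary $\partial(f')$ entirely or not at all, so modulo these contractible cycles you may work in the subgraph $G'$ of $\Gamma_2$ obtained by deleting the promoted edges. (The appeal to Corollary~\ref{co:rank2Cycle} and to $\Delta_{\Gamma_h}$ at the start is an unnecessary detour: the lemma is a purely combinatorial statement about cycles and homology, and membership of $W(\sigma)$ in $S$ plays no role in it.)

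The gap is in your final dimension-matching step, and it is concrete for the codes of Theorem~\ref{th:tsc-2}. First, the cycle-space dimension of $G'$ is $|E(G')|-|V(G')|+c$ where $c$ is the number of connected components of $G'$; this $c$ is not supplied by an Euler count of $\Gamma_2$ and can only be found by the very structural analysis you are trying to bypass. Second, your list of bounding discs is incomplete for Theorem~\ref{th:tsc-2}: the unmodified $v$-faces contribute only $f$ boundaries, whereas there $G'$ has $8e$ vertices and $10e$ edges and its cycle space contains the $2e$ boundaries of the $e$-faces (which carry no promoted edges in that construction) together with the $f$ unmodified $v$-face boundaries, all independent --- so a count of $f$ cannot saturate the dimension, and indeed the residual cycle can still detour through three sides of an $e$-face square via the degree-2 vertices of the modified $v$-faces. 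The clean repair is local rather than global, and is essentially what the paper does. For Theorem~\ref{th:tsc-1}, every vertex of $\Gamma_2$ loses exactly one (promoted) edge, so $G'$ is 2-regular, and the two surviving edges at each vertex both lie on the boundary of the unique $f$-face containing it; hence $G'$ is literally a disjoint union of $f$-face boundaries and nothing can wind around a handle. For Theorem~\ref{th:tsc-2}, the degree-2 vertices force any cycle of $G'$ along three sides of an $e$-face square; adding the $e$-face boundary replaces that path by the fourth side, after which the cycle is confined to the boundaries of the unmodified $v$-faces. With that repair your argument closes and agrees with the paper's conclusion.
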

\begin{proof}
We use the notation as in Construction~\ref{proc:tsc-new}.
We can assume that such a cycle does not contain a vertex from $V(\Gamma_h)\setminus V(\Gamma_2)$. 
If such a vertex is part of the hyper cycle then all the vertices that belong to that $v$-face
are also part of it and there exists another cycle $\sigma'$ which consists of rank-2 edges and is not
incident on the vertices in $V(\Gamma_h)\setminus V(\Gamma_2)$. 

Suppose on the contrary that  $\sigma$ contains only rank-2 edges of $\Gamma_h$. In the hypergraphs
of Theorem~\ref{th:tsc-1}, every vertex in $\Gamma_h$  has one rank-3 edge incident on it, further each
vertex of $\Gamma_h$ is trivalent and 3-edge colourable with the rank-3 edges all colored the same. Therefore, 
$\sigma$ consists of rank-2 edges which are alternating in color. Every vertex is in the boundary of some $f$-face of $\Gamma_2$, say $\Delta$. Note that  an $f$-face does not have any rank-3 edge in its boundary although such an edge is incident on its vertices. This implies that $\sigma$ is the boundary of $\Delta$, therefore, homologically trivial cycle in contradiction our assumption. Therefore, $\sigma$ must contain some rank-3 edges. This proves the statement for the graphs  in Theorem~\ref{th:tsc-1}.

Suppose now that $\sigma$ is a cycle in the hypergraphs from Theorem~\ref{th:tsc-2}.
Now assume that there is a vertex 
in $\sigma$ that is in the $v$-face which has rank-3 edges in its boundary. This edge is incident 
on two vertices which are such that the rank-3 edges are in the boundary while the rank-2 edges are
out going and form the boundary of the 4-sided $e$-face incident on $u$, $v$. Therefore, the hyper cycle
$\sigma$ can be modified so that it is not incident on any $v$-face which has a rank-3 edge in its boundary. This implies from the $e$-faces only those edges are present in $\sigma$ that are in the
boundary of $e$-face and an $v$-face that has no rank-3 edges in its boundary. This edge is also coloured
same the color of the $f$-faces in $\Gamma_2$. Further $\sigma$ cannot have any edges that are
of the same color as the $v$-faces. Thus $\sigma$ must have the edges that are colored $b$ and
$g$ the colors of the $f$-faces and $e$-faces respectively. But this implies that $\sigma$ is the
union of the boundaries of $v$-faces, because only if there are edges of $r$-type can it leave the 
boundary of  a $v$-face. This contradicts that $\sigma$ is non trivial homologically. 
\end{proof}

\begin{lemma}\label{lm:nontrivialCycleProp-2}
Suppose that $\sigma$ is a homologically nontrivial hyper cycle of $\Gamma_h$ in Theorem~\ref{th:tsc-1} ~or~\ref{th:tsc-2}. Then $W(\sigma)$ is not in the gauge group.
\end{lemma}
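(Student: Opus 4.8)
The plan is to exploit the identity $\mc{G} = C(\mc{L}_{\Gamma_h})$ from Theorem~\ref{th:suchara-Const}, which says that every gauge element commutes with every cycle operator. Consequently $W(\sigma) \in \mc{G}$ if and only if $W(\sigma)$ commutes with $W(\tau)$ for every hypercycle $\tau$, so to prove $W(\sigma) \notin \mc{G}$ it suffices to exhibit a single hypercycle $\tau$ with $W(\sigma)W(\tau) = -W(\tau)W(\sigma)$. First I would normalize $W(\sigma)$: by Lemma~\ref{lm:nontrivialCycleProp-1} the cycle $\sigma$ contains rank-3 edges, and by the weight-reduction argument of Lemma~\ref{lm:tsc-distance} there is a $g \in \mc{G}$ for which $E := g\,W(\sigma)$ is a pure-$Z$ operator supported on one endpoint of each rank-3 edge of $\sigma$. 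Since $g \in \mc{G}$ commutes with every cycle operator, $E$ and $W(\sigma)$ have the same commutation relation with any $W(\tau)$; thus it is enough to find $\tau$ anticommuting with the $Z$-string $E$, which runs along the homologically nontrivial loop carried by $\sigma$.

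Second, I would build the transverse cycle. Because $\sigma$ is homologically nontrivial on the surface, nondegeneracy of the intersection pairing on $H_1$ guarantees a homology class meeting $[\sigma]$ an odd number of times, and I would realize a representative of this class as a hypercycle $\tau$ of $\Gamma_h$. The commutation of $W(\tau)$ with $E$ is governed vertex by vertex: at a vertex where a hypercycle uses two of the three incident (differently colored) edges, its operator reduces to the single Pauli of the remaining color, and $E$ anticommutes with $W(\tau)$ at a vertex of its support exactly when $\tau$'s two edges there leave a color other than the one carrying $Z$. I would route $\tau$ so that it crosses the support of $E$ only at rank-3 edges, where $E$ contributes $Z$ while $\tau$ is arranged to contribute $X$ or $Y$, forcing an anticommutation at each such crossing and commutation everywhere else. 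An odd number of crossings then yields $W(\sigma)W(\tau) = -W(\tau)W(\sigma)$, so $W(\sigma) \notin C(\mc{L}_{\Gamma_h}) = \mc{G}$.

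The two constructions differ only in the routing of $\tau$: for Theorem~\ref{th:tsc-1} the rank-3 edges line the $e$-faces and form connected chains, so a dual loop can be pushed across them cleanly, whereas for Theorem~\ref{th:tsc-2} the rank-3 edges sit only in the $\rm{F}_v$-faces and $\tau$ must be threaded through the intervening $e$- and $v$-faces; in both cases Lemma~\ref{lm:nontrivialCycleProp-1} ensures that any all-rank-2 portion of $\tau$ closes up as a face boundary and therefore contributes evenly. The main obstacle is precisely this step: verifying that the topological intersection number agrees modulo $2$ with the algebraic, color-dependent anticommutation count, i.e.\ that $\tau$ can always be chosen so that its crossings with $E$ occur at rank-3 edges with the appropriate mismatched colors and that no spurious anticommutations accumulate elsewhere. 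Once the pairing is shown to descend to a nondegenerate form on homology, the claim follows uniformly for both Theorem~\ref{th:tsc-1} and Theorem~\ref{th:tsc-2}.
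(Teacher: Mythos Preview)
Your strategy is genuinely different from the paper's. The paper argues by contradiction through a \emph{minimality} reduction: it first replaces $\sigma$ by a homologically equivalent cycle with the fewest possible rank-3 edges (composing with elements of $\Delta_{\Gamma_h}$), then assumes $W(\sigma)\in\mc{G}$ and strips off the rank-2 part to reduce to the claim that the pure-$Z$ operator $O_\sigma=\prod_{e\in E_3\cap\sigma}K_e$ lies in $\mc{G}$. Writing $O_\sigma=K^{(x,y)}K^{(z)}$ with $K^{(x,y)}$ built from $X\!X$ and $Y\!Y$ link operators and $K^{(z)}$ from $Z\!Z$ ones, the support of $K^{(x,y)}$ is a rank-2 cycle, hence by Lemma~\ref{lm:nontrivialCycleProp-1} homologically trivial. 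The paper then shows, face by face (Fig.~\ref{fig:nonOccuring}), that the local stabilizers attached to those trivial faces can be used to cancel rank-3 edges from $\sigma$, contradicting minimality. No transverse cycle $\tau$ is ever constructed.

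Your plan instead tries to exhibit a witness $\tau$ with $W(\sigma)W(\tau)=-W(\tau)W(\sigma)$, invoking nondegeneracy of the intersection form on $H_1$. This is the natural ``dual logical operator'' picture and, if it can be made to work, is arguably cleaner. However, the step you yourself flag as the main obstacle is a real gap, not a formality. The commutator of two cycle operators is governed by Eq.~\eqref{eq:commuteRelns}, i.e.\ by $\sum_{e\in\sigma,\,e'\in\tau}|e\cap e'|\bmod 2$, and in a 3-edge-colored hypergraph with rank-3 edges this count does \emph{not} automatically reduce to the mod-2 topological intersection number of the underlying 1-cycles. At a shared vertex where $\sigma$ and $\tau$ use overlapping pairs of colored edges the local contribution can be $0$ rather than $1$, and at rank-3 edges the bookkeeping is different again. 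So ``route $\tau$ so that all crossings occur at rank-3 edges with mismatched colors and nothing spurious happens elsewhere'' is precisely the content of the lemma, not a routine adjustment; you would need a construction-specific argument (separately for Theorems~\ref{th:tsc-1} and~\ref{th:tsc-2}) exhibiting such a $\tau$ and verifying the parity vertex by vertex. Your appeal to Lemma~\ref{lm:nontrivialCycleProp-1} to control rank-2 portions of $\tau$ is also not quite right: that lemma says a homologically nontrivial cycle must contain rank-3 edges, not that rank-2 subpaths of an arbitrary cycle are face boundaries.

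In short, the paper's minimality-plus-local-cancellation argument trades the global problem of building a good $\tau$ for a local analysis near each rank-3 edge, which is what makes it go through without ever pinning down the intersection pairing. Your approach could likely be completed, but as written it defers the essential work.
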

\begin{proof}
Without loss of generality we can assume that $\sigma$ has a minimal number of rank-3 edges in it.
If not, we can compose it with another cycle in $\Delta_{\Gamma_h}$ to obtain one with fewer rank-3 edges. Note
that $W(\sigma) \in \mc{G}$ if and only if $W(\sigma')\in \mc{G}$.

Assume now that $W(\sigma)$ is in the gauge group. Let $E_2$ be the set of rank-2 edges and
$E_3$ be the set of rank-3 edges in $\Gamma_h$.
\be
W(\sigma) =\prod_{e\in E_2\cap \sigma}K_e\prod_{e\in E_3\cap\sigma} K_e 
\ee
The edges in $E_2\cap \sigma $ are also edges in $\overline{\Gamma}_h$ and the associated link
operators are the same. Therefore, it implies that $Z$-only operator
$ O_{\sigma} = \prod_{e\in E_3\cap\sigma} K_e$
is generated by the gauge group consisting of operators of the form $\{ X\otimes X,  Y\otimes Y, Z\otimes Z\}$. 

The operator $O_\sigma$ consists of (disjoint) rank-3 edges alone and therefore, 
for any edge $(u,v,w)$ in the support of $O_\sigma$, 
for each of the qubits $u,v,w$, one of the following must be true: 
\begin{compactenum}[(i)]
\item  Exactly one operator $Z_uZ_v$,
$Z_vZ_w$, $Z_wZ_u$ is required to generate the $Z_iZ_j$ on a pair of the qubits, where 
$i,j\in \{u,v,w\}$. The $Z$ operator on the 
remaining qubit is generated by gauge generators of the form $X_iX_j$ and $Y_iY_k$, where $i$ is one of
$\{u,v,w\}$ 
\item  The support on all the qubits is generated by 
$X_iX_j$ and $Y_iY_k$, where $i$ is one of $\{u,v,w\}$.
\end{compactenum}

For a qubit not in the support of $O_{\sigma}$, either no generator acts on it or all the three
gauge operators $X_uX_i$, $Y_u Y_j$, and $Z_u Z_v$ act on it. If it is the latter case, then it follows
that $u,v$ must be in the support of same rank-3 edge and that $v$ is also not in the support of 
$O_\sigma$.

Suppose that we can generate $O_{\sigma}$ as follows:
\be
O_{\sigma} &= & K^{(x,y)} K^{(z)},
\ee
where $K^{(x,y)}$ consists of only operators of the form $X\otimes X, Y\otimes Y$ and $K^{(z)}$
only of operators of the form $Z\otimes Z$. 
From the Lemma~\ref{lm:nontrivialCycleProp-1}, we see that the $O_{\sigma} K^{(z)}$
must be trivial homologically. The rank-3 edges incident on the
support of $O_{\sigma} K^{(z)}$ are either in the support of $\sigma$ or not.

 \begin{figure}[htb]
 \includegraphics{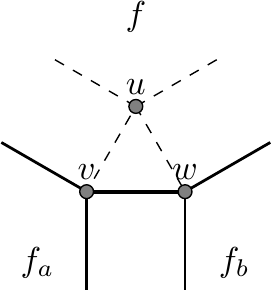}
 \caption{A rank-3 edge which is not in the support of $O_\sigma$. The solid edges indicate the link operators which are in the support of $K^{(x,y)}K^{(z)}$, while the dashed edges do not.
 The edge must occur in two cycles, one which encloses $f_a$, and another which encloses $f_b$. If the same
cycle encloses both $f_a$ and $f_b$, then the edge occurs twice in that cycle. If we consider the
stabilizer associated with these cycles then it has no support on this edge. 
 } \label{fig:nonOccuring}
 \end{figure}

A rank-3 edge $e$ which is not in 
$\sigma$ must be such that exactly two vertices from $e$ occur in the support of $O_{\sigma}K^{(z)}$.
There are two faces $f_a$ and $f_b$ associated \footnote{The two faces $f_a$ and $f_b$ could be the same face. In this case the associated cycle contains both the vertices.} with these two vertices, see Fig~\ref{fig:nonOccuring}. 
There is a hypercycle that encloses $f_a$ whose support contains $e$ and another that encloses $f_b$ 
and whose support  contains $e$. 
The product of these two stabilizer elements has no support on $e$ but has support on the
edges in $O_\sigma$. We can therefore, find an appropriate combination which are associated with the
trivial cycle in the support of $K^{(x,y)}$ such that $\sigma$ has fewer rank-3 edges. But this contradicts the minimality of rank-3 edges in $\sigma$. Therefore, it is not possible to 
 generate $W(\sigma)$ within the gauge group if $\sigma$ is homologically nontrivial. 
\end{proof}

{
\section{Syndrome measurement in topological subsystem codes}\label{sec:decoding}

One of the motivations for subsystem codes is the possibility of simpler recovery schemes. In this
section, we show how the many-body stabilizer generators can be measured using only  two-body
measurements. This could help in lowering the errors due to measurement and relax the error tolerance
on  measurement.

The proposed topological subsystem codes are not CSS-type unlike the Bacon-Shor codes. In CSS-type 
subsystem codes, the measurement of check operators is somewhat simple compared to the present case. 
The check operators are either $X$-type or $Z$-type. Suppose that we measure the $X$-type check 
operators  first. We can simply measure all the $X$-type gauge generators and then combine the outputs
classically to obtain the necessary stabilizer outcome. 
When we try to $Z$-type operator subsequently, we measure the $Z$-type gauge operators and once again 
combine them classically. This time around, the output of the $Z$-type gauge operators because they 
anti-commute with some $X$-type gauge operator we have uncertainty in the individual $Z$-type 
observables. Nonetheless because the $Z$-type check operator because it commutes with the $X$-type 
gauge generators, it can still be measured without inconsistency. 

When we deal with the non-CSS type subsystem codes, the situation is not so simple. We need to find
an appropriate decomposition of the stabilizers in terms of the gauge generators so that the individual
gauge outcomes can be consistently. So it must be demonstrated that the syndrome measurement can be performed by  measuring the gauge generators and a schedule exists for all the stabilizer generators. 
A condition that ensures that a certain decomposition of the stabilizer in terms of the 
gauge generators is consistent was shown in \cite{suchara10}.

\renewcommand{\thetheorem}{\Alph{theorem}}
\setcounter{theorem}{3}

\begin{theorem}[Syndrome measurement \cite{suchara10}]\label{lm:stabDecomp}
Suppose we have a decomposition of  a check operator $S$ as an ordered product of link operators $K_i$
such that 
\ben
S = K_m\cdots K_2 K_1 \mbox{  where } K_j \mbox{ is the link operator } K_{e_j}\\
\left[K_j, K_{j-1}\cdots K_1 \right] = 0 \mbox{ for all } j = 2,\cdots, m.
\een
Let  $s\in \F_2$ be the outcome of measuring S. Then to measure $S$,  measure the link operators  $K_i$ for $i=1$ to $m$
and compute $s=\oplus_{i=1}^{m} g_i$, where $g_i\in \F_2$ is the outcome of measuring $K_i$.
\end{theorem}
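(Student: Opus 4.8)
The plan is to model the sequential measurement of the link operators as a composition of orthogonal projections and to track, by induction, the eigenvalue of the partial product $P_j := K_j K_{j-1}\cdots K_1$ after the $j$-th measurement. Each $K_i$ is a Hermitian Pauli operator with $K_i^2 = I$, so measuring $K_i$ with outcome $g_i \in \F_2$ amounts to applying the projector $\Pi_i^{g_i} := \tfrac{1}{2}\bigl(I + (-1)^{g_i} K_i\bigr)$ onto the $(-1)^{g_i}$-eigenspace of $K_i$. Since $P_m = S$, it suffices to show that after measuring $K_1,\dots,K_m$ in order the resulting state is an eigenstate of $S$ with eigenvalue $(-1)^{g_1\oplus\cdots\oplus g_m}$; this is precisely the assertion that $s = \oplus_{i=1}^m g_i$ is the value obtained for $S$.

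I would prove by induction on $j$ that the post-measurement state $\ket{\psi_j} = \Pi_j^{g_j}\cdots \Pi_1^{g_1}\ket{\psi}$ is an eigenstate of $P_j$ with eigenvalue $(-1)^{g_1\oplus\cdots\oplus g_j}$. The base case $j=1$ is immediate because $P_1 = K_1$ and $\ket{\psi_1} = \Pi_1^{g_1}\ket{\psi}$ lies in its $(-1)^{g_1}$-eigenspace. For the inductive step the key observation is that the hypothesis $[K_j, P_{j-1}] = 0$ forces $\Pi_j^{g_j}$, being a polynomial in $K_j$, to commute with $P_{j-1}$. Applying $P_{j-1}$ to $\ket{\psi_j} = \Pi_j^{g_j}\ket{\psi_{j-1}}$ then lets me pull $P_{j-1}$ through the projector and invoke the inductive hypothesis, giving $P_{j-1}\ket{\psi_j} = (-1)^{g_1\oplus\cdots\oplus g_{j-1}}\ket{\psi_j}$. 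As $\ket{\psi_j}$ also lies in the $(-1)^{g_j}$-eigenspace of $K_j$ by construction, I obtain $P_j\ket{\psi_j} = K_j P_{j-1}\ket{\psi_j} = (-1)^{g_j}(-1)^{g_1\oplus\cdots\oplus g_{j-1}}\ket{\psi_j}$, which is the required eigenvalue $(-1)^{g_1\oplus\cdots\oplus g_j}$.

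Setting $j = m$ finishes the proof: at the end of the schedule the state is an eigenstate of $S$ with eigenvalue $(-1)^{s}$ for $s = \oplus_{i=1}^m g_i$, so the classically computed XOR reproduces the result of a direct measurement of $S$.

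The main obstacle, and the reason the ordering condition is phrased as it is, lies in the inductive step: the $K_i$ need not commute pairwise---indeed in a genuine subsystem code distinct gauge generators anticommute---so one cannot naively claim that the accumulated eigenvalue survives each successive measurement. What saves the argument is that the hypothesis constrains $K_j$ to commute not with the earlier generators individually but with their ordered product $P_{j-1}$; because $\Pi_j^{g_j}$ is a function of $K_j$ alone, this one commutation relation is exactly what is needed to ensure that the $j$-th measurement does not disturb the eigenvalue of $P_{j-1}$ established by the earlier measurements. I would take care to highlight this, as it is the entire content of the lemma.
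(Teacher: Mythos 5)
The paper states this result without proof, attributing it to Suchara, Bravyi and Terhal, so there is no in-paper argument to compare against and your proposal must stand on its own. The inductive core is sound and isolates exactly the right mechanism: since $\Pi_j^{g_j}=\tfrac{1}{2}\bigl(I+(-1)^{g_j}K_j\bigr)$ is a polynomial in $K_j$ alone, the single hypothesis $[K_j,\,K_{j-1}\cdots K_1]=0$ suffices to pull the partial product $P_{j-1}$ through the $j$-th projector, so the eigenvalue of $P_j=K_j\cdots K_1$ accumulates as $(-1)^{g_1\oplus\cdots\oplus g_j}$ even though the individual $K_i$ may anticommute in pairs. Your closing remark that this one commutation relation is the entire content of the lemma is correct.

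One step is missing at the end. You establish that the post-measurement state is an eigenstate of $S=P_m$ with eigenvalue $(-1)^{s}$, $s=\oplus_{i}g_i$, and then assert that this ``reproduces the result of a direct measurement of $S$.'' That identification is not automatic: a priori the sequence of projections could move the state between $S$-eigenspaces, in which case the computed $s$ would describe the output state but not the syndrome carried by the input. The missing ingredient is that $S$ is a check operator, hence lies in the center of the gauge group $\mc{G}$ and commutes with every link operator $K_i$; consequently each $\Pi_i^{g_i}$ commutes with $S$, the $S$-eigenvalue (and, for a general input, the full $S$-measurement statistics) is preserved throughout the schedule, and the eigenvalue $(-1)^{s}$ of the final state equals the value $S$ had on the input state. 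With that one observation added, the argument is complete.
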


\renewcommand{\thetheorem}{\arabic{theorem}}
\setcounter{theorem}{8}

\renewcommand{\thetheorem}{\arabic{theorem}}
\setcounter{theorem}{8}

\begin{theorem}\label{th:syndrome}
The syndrome measurement of the subsystem codes in Theorems~\ref{th:tsc-1}~and~\ref{th:tsc-2} can be performed in three rounds using the following procedure, using the decompositions given in Fig.~\ref{fig:stabGen-v-face-decompose},~\ref{fig:stabGen-f-face-2}, for Theorem~\ref{th:tsc-1} and
Fig.~\ref{fig:stabGen-v-face-tsc-2-a},~\ref{fig:stabGen-v-face-tsc-2-c} for Theorem~\ref{th:tsc-2}.
\begin{compactenum}[(i)]
\item Let a stabilizer generator $W(\sigma) =\prod_{i} K_i \in S$ be decomposed as follows
\ben 
W(\sigma)=\prod_{i\in E_r}K_i \prod_{j\in E_g} K_j \prod_{k\in E_b} K_k = S_b S_g S_r 
\een where $K_i$ is a link operator and $E_r$, $E_g$, $E_b$ are the link operators corresponding to edges coloured $r$, $g$, $b$ respectively. 
\item Measure the gauge operators corresponding to the edges of different color at each level.
\item Combine the outcomes as per the decomposition of $W(\sigma)$.
\end{compactenum}
\end{theorem}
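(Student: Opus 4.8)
The plan is to reduce the statement to Theorem~\ref{lm:stabDecomp}, the syndrome-measurement criterion of \cite{suchara10}: for each stabilizer generator $W(\sigma)\in S$ it suffices to order the two-body link operators of its decomposition as $K_m\cdots K_1$ so that every partial product $K_{j-1}\cdots K_1$ commutes with the next factor $K_j$. The three rounds are exactly the three colour levels: one first measures all link operators whose parent edge is coloured $r$, then those coloured $g$, then those coloured $b$, which realises $W(\sigma)=S_bS_gS_r$ with $S_c=\prod_{e\in E_c\cap\sigma}\overline{K}_e$. The decomposition of each generator into two-body link operators of $\overline{\Gamma}_h$ is already supplied by Lemmas~\ref{lm:stabGens-tsc-1} and~\ref{lm:stabGens-tsc-1-f} together with Figs.~\ref{fig:stabGen-v-face-decompose} and~\ref{fig:stabGen-f-face-2} for Theorem~\ref{th:tsc-1}, and by Figs.~\ref{fig:stabGen-v-face-tsc-2-a} and~\ref{fig:stabGen-v-face-tsc-2-c} for Theorem~\ref{th:tsc-2}; so only the commutation bookkeeping needs to be checked.

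First I would verify \emph{within-round} consistency, so that each colour class is a commuting set measurable in a single round. For $g$ and $b$ this is immediate, since every rank-3 edge is coloured $r$, making $S_g$ and $S_b$ pure products of $Y\otimes Y$ and $Z\otimes Z$ operators. For $r$ the factors mix the $X\otimes X$ of rank-2 $r$-edges with the $Z\otimes Z$ of the rank-3 triangles; here I would invoke the proper 3-edge-colouring, which forces distinct $r$-coloured hyperedges to be vertex disjoint. Hence an $X\otimes X$ factor and a triangle $Z\otimes Z$ factor never share a qubit, while the three $Z\otimes Z$ factors of one triangle share qubits but carry the same Pauli, so all of $S_r$ commutes.

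The main step, and the expected obstacle, is \emph{cross-round} consistency: within the decomposition of a single generator, each $g$-factor must commute with $S_r$ and each $b$-factor with $S_rS_g$. By~\eqref{eq:commuteRelns} a link operator on edge $(a,b)$ anticommutes with an earlier-round operator precisely when the two meet in a single vertex and carry different Pauli types; since by H3 no earlier edge meets both $a$ and $b$, the number of anticommuting partners equals the number of earlier-coloured edges of $\sigma$ incident on $a$ plus those incident on $b$. I would then show this total is always even. The crucial input is that the stabilizer cycles are not arbitrary: the canonical cycle $\sigma_1$ is a face boundary whose edges alternate between two colours, and $\sigma_2$ pairs adjacent rank-3 edges through rank-2 edges of a common colour, as fixed by the canonical choice of the Remark. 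Feeding this controlled colour pattern into the hypercycle condition---every vertex of $\sigma$ is incident to an even number of its edges---one checks vertex by vertex that earlier-round edges contribute anticommutations in cancelling pairs.

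Granting the parity count, every partial product of the colour-ordered decomposition commutes with the next factor, so Theorem~\ref{lm:stabDecomp} applies and each $W(\sigma)$ is recovered as the XOR of the two-body outcomes gathered over the three rounds, using the decompositions in the cited figures. The principal difficulty lies entirely in the parity computation for the $\sigma_2$-type generators, where a rank-3 triangle contributes $Z\otimes Z$ factors touching two qubits at once; there the bookkeeping must be carried out against the explicit decompositions of Figs.~\ref{fig:hyperCycleDecompos} and~\ref{fig:stabGen-v-face-tsc-2-c} rather than for a generic alternating cycle.
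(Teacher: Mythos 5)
Your proposal is correct and follows essentially the same route as the paper: reduce to Theorem~\ref{lm:stabDecomp}, take the three rounds to be the three colour classes, get within-round commutativity from the proper 3-edge-colouring, and verify the cross-round conditions $[S_g,S_r]=0$ and $[S_b,S_gS_r]=0$ by an even-overlap count on the canonical cycles---the paper just does this count by direct inspection of the cited decomposition figures rather than by your vertex-by-vertex parity argument. The only element of the paper's proof not reflected in your plan is the short explicit check that generators of adjacent faces with overlapping support do not interfere when all of them are measured in the same three global rounds (the shared rank-2 edges of each colour being disjoint, and the rank-3 link operators mutually commuting), which your within-round commutativity essentially subsumes.
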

\begin{proof}
In the subsystem codes of Theorem~\ref{th:tsc-1}, there are two stabilizer generators associated with
the $v$-face and $f$-face. Those associated with the $v$-face are shown in 
Fig.~\ref{fig:stabGen-v-face-decompose}. Consider the first type of stabilizer generator 
$W(\sigma_1)$. Clearly, $W(\sigma_1)$ consists of two kinds of link operators, $r$ type and
$g$ type. The link operators  corresponding to the $r$-type edges are all disjoint and can therefore be measured in one round.  In the second round, we can measure the link operators corresponding to 
$g$-type edges. Since this is an even cycle we clearly have 
$[S_g,S_r]=0$. Note that $E_b=\emptyset$ because there are no $b$-edges in $\sigma_1$.
A similar reasoning holds for the generator $W(\sigma_1)$ shown in Fig.~\ref{fig:stabGen-f-face-2}
corresponding to an $f$-face.

For the second type of the stabilizer generators $W(\sigma_2)$, observe that as illustrated in
Fig.~\ref{fig:stabGen-v-face-decompose}, the $r$-edges are disjoint with the ``outer'' $b$ and $g$-edges
and can be measured in the first round. The ``outer'' $g$-edges being disjoint with the $r$-edges
we satisfy the condition $[S_g,S_r]=0$. In the last round when we measure the $b$-edges, since the
$b$-edges and $g$-edges overlap an even number of times and being disjoint with the $r$-edges
we have $[S_b,S_gS_r]=0$. Thus by Theorem~\ref{lm:stabDecomp}, this generator can be measured
by measuring the gauge operators. 

The same reasoning can be used to measure $W(\sigma_2)$ corresponding to the $f$-faces, but with
one difference. The outer edges are not all of the same color, however this does not pose a problem
because in this case as well we can easily verify that $[S_g,S_r]=0$, because they are
disjoint. Although the $b$-edges overlap with both the $r$ and $g$-edges note that each of them
individually commutes with $S_gS_r$ because they overlap exactly twice. Thus $[S_b,S_gS_r]=0$
 as well and we can measure $W(\sigma_2)$ through the gauge operators.

 Syndrome measurement of two disjoint stabilizers do not obviously interfere with each other. However,
 when two generators have overlapping support, they will not interfere as demonstrated below.  
 
 Note that every every vertex of $\Gamma_h $ in Theorem~\ref{th:tsc-1} has a rank-3 edge incident on it.
  As illustrated in Fig.~\ref{fig:nonIntereference}, edges which are not shared are essentially the rank-3 edges and each one of them figures in only one of the stabilizer generators, but because they 
  all commute they can be measured in the same round. The $r$ and $g$ edges are shared and
  appear in the support stabilizer generators of two adjacent faces. Nonetheless because edges of
  each color are disjoint they can be measured simultaneously. As has already been demonstrated the 
  edges of each color are such that for each stabilizer generator  $[S_g,S_r]=0$ and $[S_b,S_gS_r]=0$.

 \begin{figure}[htb]
\includegraphics{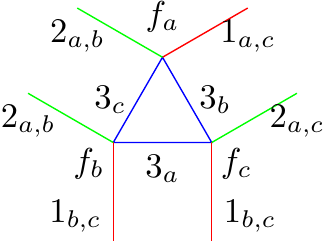}
 \caption{Noninterference of syndrome measurement. The faces $f_a$, $f_b$, $f_c$ have stabilizer generators that have overlapping support. The edges are labelled with the round in which they are
 measured, the subscripts indicate the faces with which the edge is associated. Thus $3_a$ indicates that this edge should be measured in the third round and it is used in the stabilizer generator $W(\sigma_2)$ of the face $f_a$.  } \label{fig:nonIntereference}
 \end{figure}
 A similar argument can be made for the codes in Theorem~\ref{th:tsc-2}, the proof is omitted.
 \end{proof}

The  argument  above  shows that the subsystem codes of Theorems~\ref{th:tsc-1}~and~\ref{th:tsc-2} can be measured in three rounds using the same procedure outlined in Theorem~
\ref{th:syndrome} if we assume that a single qubit can be involved in two distinct measurements
simultaneously. If this is not possible, then we need four time steps to measure all the checks. 
The additional time step is due to the fact that a rank-3 edge results in three link operators.
However only two of these are independent and they overlap on a single qubit. To measure both operators,  we need two 
time steps. Thus the overall time complexity is no more than four time steps. This is in contrast
to the schedule in \cite{bombin10}, which takes up to six time steps.

\section{Conclusion and Discussion}\label{sec:summary}


\subsection{Significance of proposed constructions}

To appreciate the usefulness of our results, it is helpful to understand Theorem~\ref{th:suchara-Const}
in more detail. 
First of all consider the complexity of finding hypergraphs which satisfy the requirements therein. 
Finding if a cubic graph is 3-edge-colorable is known to be NP-complete \cite{holyer81}. 
Thus determining if a 3-valent hypergraph is  3-edge-colorable is at least as hard. In view of the 
hardness of this problem, the usefulness of our results becomes apparent. 
One such family of codes is due to \cite{bombin10}. In this paper we provide new families of subsystem codes. 
Although they are also derived from  color codes, they lead to subsystem codes with different parameters. With respect to the results of \cite{suchara10}, our results bear a relation similar to a specific construction of quantum codes, say that of the Kitaev's topological codes, to the CSS construction.

Secondly, the parameters of the subsystem code constructed using Theorem~\ref{th:suchara-Const} depend on
the graph and the embedding of the graph. They are not immediately available in closed form for all
hypergraphs. We give two specific families of hypergraphs where the parameters can be given in closed form. In addition our class of hypergraphs naturally includes the hypergraphs arising in Bombin's 
construction. 

Thirdly,  Theorem~\ref{th:suchara-Const} 
does not distinguish between the case when the stabilizer is local and when the stabilizer
is non-local. Let us elaborate on this point. The subsystem code on the honeycomb lattice, for 
instance, can be viewed as a hypergraph albeit with no edges of rank-3. In the associated subsystem 
code the stabilizer can have support over a cycle which is nontrivial homologically. 
In fact, we can even provide examples of subsystem codes derived from true hypergraphs in that there 
exist edges or rank greater than two,  and whose stabilizer can  have elements associated to nontrivial 
cycles of the surface. Consider for instance, the 2-colex shown in Fig.~\ref{fig:4-8-colex}. 
The hypergraph derived from this 2-colex is shown in Fig.~\ref{fig:4-8-hg}. This particular code has a nonzero-rate even though its stabilizer includes cycles that are not nontrivial homologically. 

\begin{figure}[htb]
\subfigure[2-colex.]{
\includegraphics{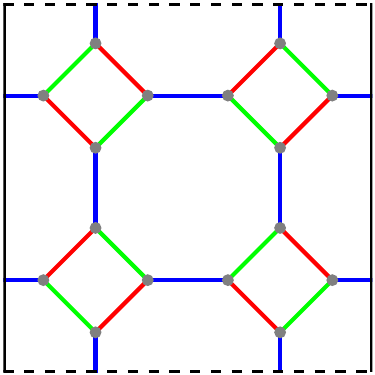}
\label{fig:4-8-colex}
}
\subfigure[Subsystem code.]{
\includegraphics{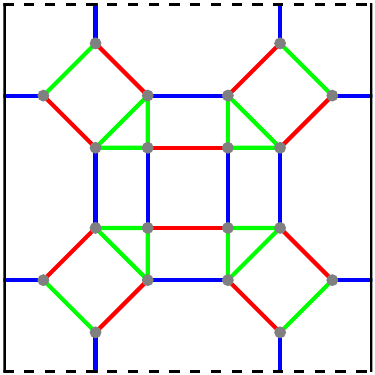}

\label{fig:4-8-hg}
}
\caption{(Color online) A subsystem code in which some of the stabilizer generators are nonlocal. This is derived from the color code on a torus from a square octagon lattice. Opposite sides are identified. }\label{fig:4-8-colex-2-hg}
\end{figure}

In contrast the subsystem codes proposed by Bombin, all have local stabilizers. It can be conceded that the locality of the stabilizer simplifies the decoding for stabilizer codes. But this is not necessarily  a restriction for the subsystem codes.
A case in point is the family of  Bacon-Shor codes which have non-local stabilizer generators. 
It would  be important to know what effect the non-locality of the stabilizer generators have on the threshold. 
Although we do not provide a criterion as to when the subsystem codes are topological in the sense
of having a local stabilizer, our constructions provide a partial answer in this direction. It would
be certainly more useful to have this criterion for all the codes of Theorem~\ref{th:suchara-Const}.

Not every cubic graph can allow us to define a subsystem code. Only if it satisfies the 
commutation relations, namely Eq.~\eqref{eq:commuteRelns} is it possible. As pointed out in 
\cite{suchara10}, the bipartiteness of the graph plays a role. The 
Petersen graph satisfies H1--4 being a cubic graph but with no 
hyperedges. But it does not admit a subsystem code to be defined because there is no consistent 
assignment of colours that enables the definition of the gauge group. In other words, we cannot assign 
the link operators such that Eq.~\eqref{eq:commuteRelns} are satisfied. We therefore, add the 3-edge-
colorability requirement to the hypergraph construction of the Suchara et al. \cite{suchara10}.

\begin{center}
\begin{figure}[htb]
\includegraphics{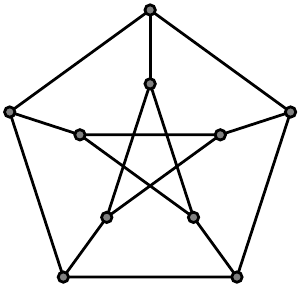}

\caption{The Petersen graph although cubic and satisfying H1--4, does not lead to a subsystem code via
Construction~\ref{proc:tsc-suchara}; it is  not 3-edge colorable.}
\end{figure}
\end{center}

Fig.\ref{fig:contrib} illustrates our contributions in relation to  previous work.

\begin{center}
\begin{figure}[htb]
\includegraphics{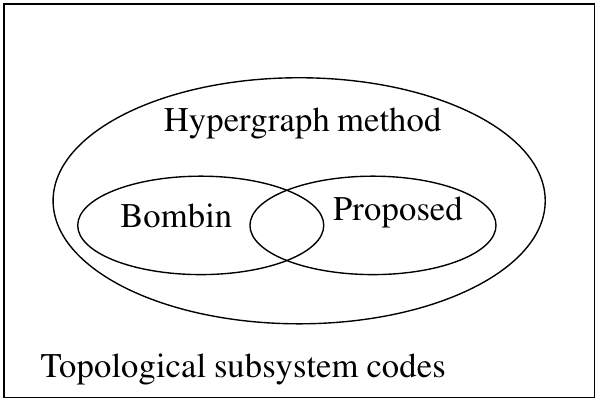}
\caption{Proposed constructions in context. Note that some of the hypergraph based subsystem codes may have homologically nontrivial stabilizer generators.
}\label{fig:contrib}
\end{figure}
\end{center}

\begin{acknowledgments}
 This work was supported by the Office of the Director of National Intelligence - Intelligence Advanced Research Projects Activity through Department of Interior contract D11PC20167. Disclaimer: The views and conclusions contained herein are those of the authors and should not be interpreted as necessarily representing the official policies or endorsements, either expressed or implied, of IARPA, DoI/NBC, or the U.S. Government.
\end{acknowledgments}

\def\cprime{$'$}

\end{document}